\def\Pr{{\mathbb P}}
\long\def\symbolfootnote[#1]#2{\begingroup
\def\thefootnote{\fnsymbol{footnote}}\footnote[#1]{#2}\endgroup}
\begin{document}

\newtheorem{theorem}{Theorem}
\newtheorem{lemma}{Lemma}
\newtheorem{proposition}{Proposition}
\newtheorem{corollary}{Corollary}[theorem]
\newtheorem{definition}{Definition}
\newtheorem{remark}{Remark}[theorem]

\title{Throughput of Rateless Codes over Broadcast Erasure Channels}

%-----------------Note that authors' names MUST NOT APPEAR-------------%

\author{\IEEEauthorblockN{Yang Yang, and Ness B. Shroff,~\IEEEmembership{Fellow,~IEEE}}
}
%\IEEEauthorblockA{Department of Electrical and Computer Engineering\\
%The Ohio State University, Columbus 43210, OH}

%-----------------Note that authors' names MUST NOT APPEAR-------------%

\maketitle
\begin{abstract}
In this paper, we characterize the throughput of a broadcast network with $n$ receivers using rateless codes with block size $K$. We assume that the underlying channel is a Markov modulated erasure channel that is \emph{i.i.d.}~across users, but can be  correlated in time. We characterize the system throughput asymptotically in $n$. Specifically, we explicitly show how the throughput behaves for different values of the coding block size $K$ as a function of $n$, as $n\to\infty$. {\color{black}For finite values of $K$ and $n$, under} the more restrictive assumption of {\color{black}Gilbert-Elliott} channels, we are able to provide a lower bound on the maximum achievable throughput. Using simulations we show the tightness of the bound with respect to system parameters $n$ and $K$, and find that its performance is significantly better than the previously known lower bounds. 
\end{abstract}

% A category with the (minimum) three required fields
%\category{C.2}{Computer-Communication Networks}{Miscellaneous}
%A category including the fourth, optional field follows...
%\category{D.2.8}{Software Engineering}{Metrics}[complexity measures, performance measures]
%{\color{black}Modify?}

%\terms{Performance}

%\keywords{Broadcast erasure channel, Markov modulated channel, rateless erasure code, random linear network code, throughput, achievable rate} % NOT required for Proceedings

\section{Introduction}
In this work\footnote{The preliminary version of this paper has appeared in \cite{YangShroffMobihoc2012}.
}, we study the throughput of a wireless broadcast network with $n$ receivers using rateless codes. In this broadcast network, channels between the transmitter and the receivers are modeled as packet erasure channels where transmitted packets may either be erased or successfully received. This model describes a situation where packets may get lost or are not decodable at the receiver due to a variety of factors such as channel fading, interference or checksum errors. We assume that the underlying channel is a Markov modulated packet erasure channel that is \emph{i.i.d.}~across users, but can be correlated in time. We let $\gamma$ denote the steady state probability that a packet is transmitted successfully on the erasure channel.

Instead of transmitting the broadcast data packet one after another through feedback and retransmissions, we investigate a class of coding schemes called rateless codes (or fountain codes). In this coding scheme, $K$ broadcast packets are encoded together prior to transmission. $K$ is called the coding block size. A rateless encoder views these $K$ packets as $K$ input symbols and can generate an arbitrary number of output symbols (which we call coded packets) as needed until the coding block is decoded. Although some coded packets may get lost during the transmission, rateless decoder can guarantee that any $K(1+\varepsilon)$ coded packets can recover the original $K$ packets {\color{black} with high probability}, where $\varepsilon$ is a positive number that can be made arbitrarily small at the cost of coding complexity.
Examples of rateless erasure codes include Raptor codes \cite{TON}, LT Codes \cite{LT-code} and random linear network codes \cite{Ho}, where the former two are more efficient when $K$ is very large and random linear network code is more efficient when $K$ is relatively small and the field size of packets is large. The {\color{black}best} encoding and decoding complexity of rateless codes {\color{black}(e.g. Raptor codes)} increase linearly as the coding block size $K$ increases. Further, increasing the coding block size can result in large delays and large receiver buffer size. Therefore, real systems always have an upper bound on the value of $K$.

We consider broadcast traffic and a discrete time queueing model, where the numbers of packet arrivals over different time slots are independent and identically distributed and the packet length is a fixed value. 
We let $\lambda$ denote the packet arrival rate and assume that the encoder waits until there are at least $K$ packets in the queue and then encodes the first $K$ of them as a single coding block. In this case, the largest arrival rate that can be stabilized is equal to the average number of packets that can be transmitted per slot, which we call the throughput.
Therefore, we only need to characterize the throughput that can be achieved using rateless codes under parameters $K$ and $n$. As described in Figure~\ref{ArrivalProcessFig}, the channel dynamics for the $i^{\text{th}}$ receiver is denoted by a stochastic process $\{X_{ij}\}_{j\in\mathbb{N}}$, where $j$ is the index of the time slot in which one packet can be transmitted and $X_{ij}$ is the channel state of $i^{\text{th}}$ receiver during the transmission of the $j^{\text{th}}$ packet. We capture a fairly general correlation structure by letting the current channel state be impacted by the channel states in previous $l$ time slots, where $l$ can be any number. As the number of receivers $n$ approaches infinity, we show that the throughput is nonzero only if the coding block size $K$ increases at least as fast as $\log n$. In other words, if $c\triangleq\lim_{n\to\infty}\frac{K}{\log n}$, the asymptotic\footnote{the asymptotic is with respect to increasing the number of receivers~$n$} throughput is positive whenever $c>0$. In Theorem~\ref{Theorem1}, by utilizing large deviation techniques, we give an explicit expression for the asymptotic throughput, which is a function of $K$, $n$, $\gamma$ and the channel correlation structure.

\begin{figure}[h]
\begin{center}
\includegraphics [scale=0.55]{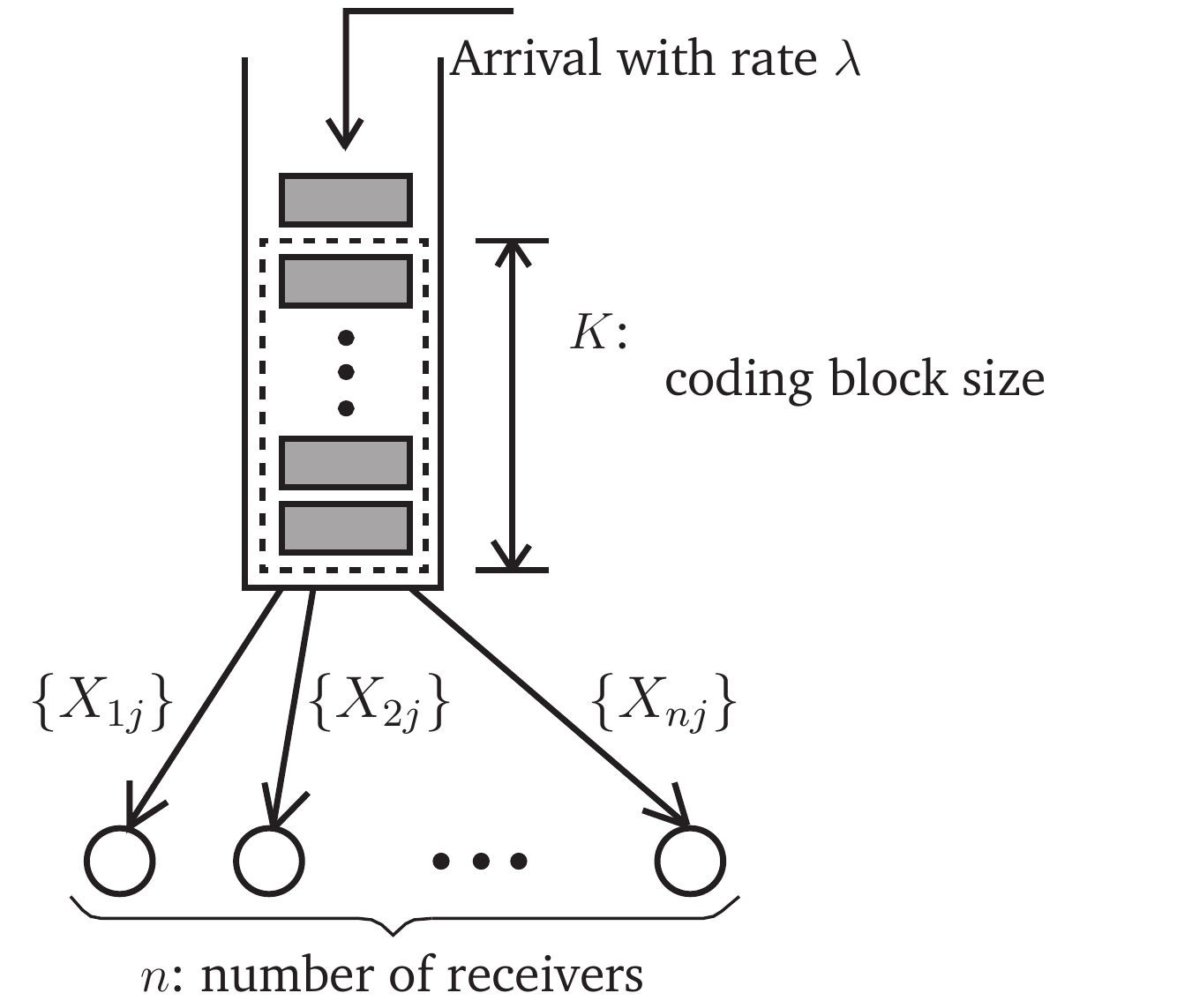}
\caption{Broadcast with discrete time queueing model}
\label{ArrivalProcessFig}
\end{center}
\end{figure}
{\color{black}To study the non-asymptotic behavior of the system, we make a more restrictive channel assumption that 
the current channel state is impacted by only the channel state in previous 1 time slot, which is the so called Gilbert-Elliott channel model.
%In other words, the erasure probability of every receiver channel at any time slot is $1-\gamma$.
In this case, for any finite $K$ and $n$, we find a lower bound on the throughput in terms of the transmission time of a system with larger $K$ and $n$. As a special case when the channels are memoryless, if $\frac{K}{\log n}$ is kept constant, this lower bound reveals that the throughput will follow a decreasing pattern as the number of receivers $n$ increases. By combining this result with the characterization of the asymptotic throughput, we are able to provide a lower bound on the maximum achievable throughput for any finite values of $K$ and $n$. This lower bound captures the asymptotic throughput in the sense that when $n$ approaches infinity, it coincides with the asymptotic throughput.
}
\subsection{Related Work}
Among the works that investigate the throughput over erasure channels, \cite{Cogill}, \cite{CogillBrooke}, \cite{Swapna} and \cite{Online} are the most relevant to this work. 
In \cite{Swapna}, the authors investigate the asymptotic throughput as a function of $n$ and $K$ and also show that the asymptotic throughput will be non-zero only if K at least scales with $\log n$. However, they only consider the channel correlation model with $l=1$ and use a completely different proof technique. Moreover, no explicit expression on the asymptotic throughput is provided. In \cite{Cogill} and \cite{CogillBrooke}, two lower bounds on the maximum achievable rate $\lambda$ are provided. However, their bound does not converge to the asymptotic throughput when $n$ approaches infinity. 
%{\color{black}Further, their bound is only valid under the memoryless channel assumption, while our bound is applicable in the Gilbert-Elliott channel model, which contains the memoryless channel model as a special case.} 
Moreover, our bound is shown to be better in a variety of simulation settings with finite $K$ and $n$, as will be showed in Section~\ref{Simulation}. {\color{black}In \cite{Online}, the authors consider the case when instantaneous feedback is provided from every user after the transmission of each decoded packets, while we only assume that feedback is provided after the entire coding block has been decoded.}

\subsection{Key Contributions}
The main contributions of this work are summarized as follows:
\begin{itemize}
\item We give an explicit expression {\color{black}for} the asymptotic throughput of the system when the number of receivers $n$ approaches infinity {\color{black}for any values of $K$ as a function of $n$} under the erasure channel with any levels of correlation. (Theorem~\ref{Theorem1})
\item {\color{black} Under the Gilbert-Elliott channel model ($l=1$), for any finite $K$ and $n$, we find a lower bound on the throughput in terms of the transmission time of a system with larger $K$ and $n$. 
As a special case, when channels are memoryless ($l=0$), this lower bound reveals that when $K$ grows with $n$ in a way that the ratio $\frac{K}{\log n}$ is kept constant, the throughput follows a decreasing pattern as $n$ increases.
%, which tells us that for a quadratic increase of network size $n$, we need to have the coding block size $K$ more than doubled in order to get the same throughput. 
(Theorem~\ref{MonotoneTh})}
{\color{black}\item We provide an asymptotically tight lower bound on the maximum achievable throughput for any values of $K$ and $n$ under the Gilbert-Elliott channel model ($l=1$) and show that its performance is significantly better than the previously known bounds in \cite{Cogill} and \cite{CogillBrooke}. (Theorem~\ref{LowerBoundTheorem})}
\end{itemize}

The rest of this paper is organized as follows.
In Section~\ref{sysModel} we describe our model and assumptions. In Section~\ref{AsymptoticSection} we give the characterization of the asymptotic throughput. In Section~\ref{RateRegionSection} we provide a lower bound on the maximum achievable throughput for any finite values of $K$ and $n$. In Section~\ref{Simulation} we use simulations to verify our theoretical results. Detailed proofs on all the theorems can be found in Section~\ref{Proofs}. Finally, in Section~\ref{Conclusion}, we conclude the paper.

\section{System Model}\label{sysModel}

We consider a broadcast channel with $n$ receivers. Time is slotted, and the numbers of broadcast packet arrivals over different time slots are i.i.d.~with finite variance. We denote the expected number of packet arrivals per slot as the packet arrival rate $\lambda$. The transmission starts when there are more than $K$ packets waiting in the incoming queue intended for all the receivers. Instead of transmitting these packets one after another using feedback and retransmissions, we view each data packet as a symbol and encode the first $K$ of them into an arbitrary number of coded symbols as needed using rateless code (For example, Raptor Code \cite{TON} or random linear network code \cite{Ho}) until the coding block is decoded. These $K$ packets together form a single coding block with $K$ being called block size. During the transmission, the coded symbols are transmitted one after another. 

Each receiver sends an ACK feedback signal after it has successfully decoded the $K$ packets.
In the following context, the term {\it packet} and {\it symbol} are used interchangeably.

We model the broadcast channel as a slotted broadcast packet erasure channel where one packet can be transmitted per slot.
The channel dynamics can be represented by a stochastic process $\{X_{ij}\}_{1\leq i\leq n,j\in\mathbb{N}}$, where $X_{ij}$ is the state of channel between transmitter and the $i^{\text{th}}$ receiver during the transmission of $j^{\text{th}}$ packet (we also call it the $j^{\text{th}}$ time slot in the $i^{\text{th}}$ channel), which is given by
	\begin{align}
		X_{ij}=\left\{
			\begin{array}{ll}
			1&\text{$j^\text{th}$ packet in the $i^\text{th}$ channel is}\\
			&\text{ successfully received}\\
			0&\text{otherwise}
			\end{array}
		\right.\notag.
	\end{align}
	We assume that the dynamics of the channels for different receivers are independent and identical. More precisely, for all $1\leq i\leq n$, $\{X_{ij}\}_{j\geq1}$ are independent and identical processes.
	
	Since, in practice, the channel dynamics are often temporarily correlated, we investigate the situation where the current channel state distribution depends on the channel states in the preceding $l$ time slots. More specifically, for $\mathcal{F}_{im}=\{X_{ij}\}_{j\leq m}$ and fixed $l$, we define $\mathcal{H}_{im}=\{X_{im},$ $\ldots,$ $X_{i(m-l+1)}\}$ for $m\geq l\geq 1$ with $\mathcal{H}_{im}=\{\varnothing,\Omega\}$ for $l=0$, and assume that $\Pr[X_{i(m+1)}=1|\mathcal{F}_{im}]=\Pr[X_{i(m+1)}=1|\mathcal{H}_{im}]$ for all $m\geq l$. To put it another way, when $l\geq1$, the state $\left(X_{im},\ldots,X_{i(m-l+1)}\right)$, $m\geq l$ forms a Markov chain. Denote by $\Pi$ the transition matrix of the Markov chain $\left\{\left(X_{im},\ldots,X_{i(m-l+1)}\right)\right\}_{m\geq l}$, where
	\begin{align}
	\Pi=[\pi(s,u)]_{s,u\in\{0,1\}^l},\notag
	\end{align}
with $\pi(s,u)$ being the one-step transition probability from state $s$ to state $u$. Throughout this paper, we assume that $\Pi$ is irreducible and aperiodic, which ensures that this Markov chain is ergodic \cite{MarkovChain}. Therefore, for any initial value $\mathcal{H}_{l}$, the parameter $\gamma_i$ is well defined and given by
\begin{align}
	\gamma_i=\lim_{m\to\infty}\Pr[X_{im}=1],\notag
\end{align}
and, from the ergodic theorem \cite{MarkovChain} we know
\begin{align}
	\Pr\left[\lim_{m\to\infty}\frac{\sum_{j=1}^m X_{ij}}{m}=\gamma_i\right]=1.\notag
\end{align}
Since $\{X_{ij}\}_{j\geq1}$ for all $1\leq i\leq n$ are i.i.d., we denote $\gamma=\gamma_i$, for all $1\leq i\leq n$.

Using near optimal rateless codes, such as Raptor Codes \cite{TON}, LT Codes \cite{LT-code} and random linear network codes \cite{Ho}, only slightly more than $K$ coded symbols are needed to decode the whole coding block. For simplicity, here we assume that any combination of $K$ coded symbols can lead to a successful decoding of the $K$ packets.

According to the above system model, we have the following definitions:
\begin{definition}\label{TiK}{\rm The number of time slots (number of transmitted coded symbols) needed for user $i$ to successfully decode $K$ packets is defined as
    \begin{align}
      T_i(K)=\min{\Bigg\{}m{\bigg|}\sum_{j=1}^m X_{ij}\geq K{\Bigg\}}.\notag
    \end{align}
}
\end{definition}
\begin{definition}\label{TnK}{\rm
The number of time slots (number of transmitted coded symbols) needed to complete the transmission of a single coding block to all the receivers is defined as
    \begin{align}
      T(n,K)=\max\left\{T_i(K),i=1,2,\ldots,n\right\}.\notag
    \end{align}
}
\end{definition}
\begin{definition}[Initial State]
Since the current channel state depends on the channel states in the previous $l$ time-slots, for each receiver $i$, by assuming that the system starts at time slot $1$, we define the initial state of receiver $i$ as
{\color{black}
\begin{align}
\mathcal{E}_i=\left[X_{i(-l+1)},X_{i(-l+2)},\ldots,X_{i0}\right]\in\{0,1\}^l.\notag
\end{align}
}The initial state for all the receivers is then denoted as 
{\color{black}$\mathcal{E}\triangleq\left[\mathcal{E}_1,\mathcal{E}_2,\ldots,\mathcal{E}_n\right]$}.
%{\color{black}$\mathcal{E}\triangleq\cup_{i=1}^n\mathcal{E}_i$}.
\end{definition}
\begin{definition}[Throughput]\label{eta} 
%{\color{black} Given a steady state distribution of the initial state} $\mathcal{E}$, 
For a system with an infinite backlog of packets, we define throughput $\eta(n,K)$ as the long term average number of packets that can be transmitted per slot. More precisely,
{\color{black}
\begin{align}
\eta(n,K)=K\times\lim_{t\to\infty}\frac{R(t)}{t},\notag
\end{align}
where $R(t)$ is the number of successfully transmitted coding blocks in $t$ time slots. For any finite values of $K$ and $n$, it is easy to check that 
%$\{\mathcal{E}^h\}_h$ is a finite-state ergodic Markov chain and 
$\{\mathcal{E}^h,T^h(n,K)\}_h$ is a finite-state ergodic Markov renewal process, where $\mathcal{E}^h$ and $T^h(n,K)$ denote the initial state and the transmission time of the $h^{\text{th}}$ coding block, respectively. Then, $R(t)$ is the total number of state transitions that occur in $t$ time slots of this Markov renewal process. Therefore, from \cite{RenewalTheory} we know that

%The average number of packets that can be successfully transmitted per slot is defined as
\begin{align}
\eta(n,K)\overset{\text{a.s.}}{=}\frac{K}{\mathbb{E}[T(n,K)]}=\frac{K}{\mathbb{E}[\mathbb{E}[T(n,K)|\mathcal{E}]]},\label{Thput}
\end{align}
where the outer expectation in the last term denote the expectation with respect to the steady state distribution of the embedded Markov chain $\{\mathcal{E}^h\}_h$.
}
%In the special case where the channel states are $i.i.d.$ ($l=0$), we know that $\mathcal{E}=\varnothing$ and we can denote the throughput as
%\begin{align}
%\eta(n,K)=\frac{K}{\mathbb{E}[T(n,K)]}.\notag
%\end{align}
\end{definition}

\section{Asymptotic Throughput}\label{AsymptoticSection}
Before presenting the main results, we need to introduce some necessary definitions. First, define a mapping $f$ from the state space of the Markov chain $\{0,1\}^l$
to $\{0,1\}$ as
\begin{align}
	f{\big(}(X_{im},\ldots,X_{i(m-l+1)}){\big)}=X_{im}\notag.
\end{align}
Then, given a real number $\theta$, we define a matrix $\Pi_\theta$ as
\begin{align}
\Pi_\theta=\left\{
	\begin{array}{ll}
	\left[\pi(s,u)e^{\theta f(u)}\right]_{ s,u\in\{0,1\}^l}&\text{when }l\geq1\\
	\left[\gamma e^{\theta}\right]&\text{when }l=0
	\end{array}
\right.\notag.
\end{align}
Last, define a standard large deviation rate function $\Lambda(\beta,\Pi)$ as
\begin{align}
	\Lambda(\beta,\Pi)=\sup_\theta\{\theta\beta-\log\rho(\Pi_\theta)\}\label{lambdadef},
\end{align}
where $\rho(\Pi_\theta)$ denotes the Perron-Frobenious eigenvalue of $\Pi_\theta$ (See Theorem 3.1.1 in \cite{Dembozeitouni}), which is the largest eigenvalue of $\Pi_\theta$.

The asymptotic throughput for any values of $K$ as a function of $n$ is characterized by the theorem below:
\begin{theorem}\label{Theorem1}
Assume that $K$ is a function of $n$ and the value of $\lim_{n\to\infty}\frac{K}{\log n}$ exists, which we denote by $c\triangleq\lim_{n\to\infty}\frac{K}{\log n}$, then 
%for any {\color{black} distribution of} initial state $\mathcal{E}$ 
we have
      \begin{align}
        \lim_{n\to\infty}\eta(n,K)=\sup\left\{\beta{\bigg|}c\geq\frac{\beta}{\Lambda(\beta,\Pi)},0\leq\beta<\gamma\right\}\label{god}.
      \end{align}
\end{theorem}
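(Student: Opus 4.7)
The starting point is the identity $\eta(n,K)=K/\mathbb{E}[T(n,K)]$ in~(\ref{Thput}) with $T(n,K)=\max_i T_i(K)$ and $T_i(K)=\min\{m:S_i(m)\geq K\}$, where $S_i(m)=\sum_{j=1}^m X_{ij}$. Because the receiver channels are i.i.d., for every threshold $m$
\begin{align}
\Pr[T(n,K)\leq m\mid\mathcal{E}] = \Pr[S_1(m)\geq K\mid \mathcal{E}_1]^{\,n}. \notag
\end{align}
The plan is to identify a value $\beta^\ast$ at which $T(n,K)$ concentrates, so that $K/\mathbb{E}[T(n,K)]\to\beta^\ast$, and check that $\beta^\ast$ coincides with the supremum in~(\ref{god}).

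Next I would invoke the Markov-chain large-deviation principle (Theorem~3.1.2 of~\cite{Dembozeitouni}) applied to the ergodic chain $(X_{im},\ldots,X_{i(m-l+1)})_{m\geq l}$, whose log moment generating function is $\log\rho(\Pi_\theta)$ by Perron--Frobenius. Uniformly over the finite set of initial states $\mathcal{E}_1\in\{0,1\}^l$,
\begin{align}
\lim_{m\to\infty}\frac{1}{m}\log\Pr[S_1(m)<\beta m\mid \mathcal{E}_1] = -\Lambda(\beta,\Pi),\qquad 0\leq\beta<\gamma. \notag
\end{align}
Uniformity over this finite state space is what allows the outer expectation over~$\mathcal{E}$ in~(\ref{Thput}) to be absorbed without loss, reducing the problem to an unconditional statement.

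Let $\beta^\ast$ denote the supremum in~(\ref{god}); by its definition and the continuity of $\beta/\Lambda(\beta,\Pi)$ one has $c>\beta/\Lambda(\beta,\Pi)$ for $\beta<\beta^\ast$ and $c<\beta/\Lambda(\beta,\Pi)$ for $\beta^\ast<\beta<\gamma$. Fix a small $\epsilon>0$, set $m_+=\lceil K/(\beta^\ast-\epsilon)\rceil$ and $m_-=\lfloor K/(\beta^\ast+\epsilon)\rfloor$, and combine the LDP above with $K/\log n\to c$ to obtain
\begin{align}
n\,\Pr[S_1(m_+)<K]\to 0,\qquad n\,\Pr[S_1(m_-)<K]\to\infty. \notag
\end{align}
Plugging into the independence display above yields $\Pr[T(n,K)>m_+]\to 0$ and $\Pr[T(n,K)\leq m_-]\to 0$. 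The lower bound $\mathbb{E}[T(n,K)]\geq m_-(1-o(1))$ follows immediately; for the matching upper bound I would bound the tail of $T(n,K)$ by
\begin{align}
\mathbb{E}[T(n,K)] \leq m_+ + \sum_{m>m_+}\Pr[T(n,K)\geq m] \leq m_+ + n\sum_{m\geq m_+}\Pr[S_1(m)<K], \notag
\end{align}
and control the remaining series by a Chernoff estimate $\Pr[S_1(m)<\beta m]\leq e^{-m\Lambda(\beta,\Pi)}$ obtained from the same spectral $\Pi_\theta$ analysis. Dividing through and letting $\epsilon\downarrow 0$ gives $K/\mathbb{E}[T(n,K)]\to\beta^\ast$.

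The main obstacle I expect is the quantitative tail control in the last step: the asymptotic LDP only gives the correct exponent in the limit $m\to\infty$, whereas I need a non-asymptotic Chernoff bound $\Pr[S_1(m)<K]\leq Ce^{-m\Lambda(\beta,\Pi)}$ that is uniform in the initial state, together with monotonicity and convexity of $\beta\mapsto\beta/\Lambda(\beta,\Pi)$ near $\beta^\ast$, so that the exponent is strictly larger than $(\log n)/K$ for every $m\geq m_+$ and the geometric-type tail sum is $o(K)$ rather than merely finite. This hinges on careful analysis of $\rho(\Pi_\theta)$ and the optimizing $\theta$ in~(\ref{lambdadef}) and is where the proof really uses the aperiodicity and irreducibility of $\Pi$.
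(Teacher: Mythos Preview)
Your approach is correct and tracks the paper's proof closely through the first half: the identity $\eta(n,K)=K/\mathbb{E}[T(n,K)]$, the product form $\Pr[T(n,K)\le m]=\Pr[S_1(m)\ge K]^n$, and the Markov-chain G\"artner--Ellis theorem giving the rate $\Lambda(\beta,\Pi)$ are exactly what the paper uses (its Lemma~\ref{LemmaTK} and Lemma~\ref{LemmaKn}). The identification of the threshold $\beta^\ast=\beta_c$ and the dichotomy $n\Pr[S_1(m_\pm)<K]\to 0$ or $\infty$ is the same.

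The one genuine methodological difference is in the last step, precisely where you flag your own obstacle. You propose to bound the tail sum $\sum_{m>m_+}\Pr[T(n,K)\ge m]$ by a non-asymptotic Chernoff estimate uniform in the initial state. The paper instead writes $\mathbb{E}[T(n,K)/K]=\int_0^\infty h_n(u)\,du$ with $h_n(u)=\Pr[T(n,K)>Ku\mid\mathcal{E}]$, shows $h_n(u)\to{\bf 1}(u<1/\beta_c)$ pointwise from the asymptotic LDP alone, upgrades this to convergence in measure via the observation that each $h_n$ is monotone decreasing with range $[0,1]$ (Lemma~\ref{previtali}), and then invokes the Vitali convergence theorem to exchange limit and integral. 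This route uses only the limiting LDP statement and avoids the quantitative Chernoff bound you were worried about. Your direct tail-sum argument is also perfectly viable---for a finite-state irreducible aperiodic chain the needed uniform exponential bound is standard---but the Vitali packaging is what lets the paper get away with only the asymptotic rate. (One could note that the paper's appeal to Vitali on the infinite-measure domain $[0,\infty)$ with only ``uniform boundedness'' tacitly assumes a tightness condition that is morally the same tail control you need; neither proof entirely escapes this point, but the paper's formulation localizes it more cleanly.)
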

\begin{proof}
see Section~\ref{p1}.
\end{proof}
From Theorem~\ref{Theorem1}, we know that, if the coding block size $K$ is set to be a function of the network size $n$, then we can characterize the asymptotic throughput when $n$ approaches infinity in an explicit form. {\color{black}Equation~(\ref{god}) implies that the asymptotic throughput 
%is irrelevant to the initial state $\mathcal{E}$ and 
is a function of $\gamma$, $\lim_{n\to\infty}K/\log n$ and the channel correlation structure indicated by $\Pi$.}

By Theorem~\ref{Theorem1}, the asymptotic throughput in the special cases when $K\in o(\log n)$ and $K\in \omega(\log n)$ are given in the following corollary.
\begin{corollary}\label{corollary1}
Assume that $K$ is a function of $n$.
%for any {\color{black} distribution of} initial state $\mathcal{E}$ 
We then have
\begin{enumerate}
	\item if $K\in o(\log n)$, then\footnote{We use standard notations: $f(n)=o(g(n))$ \\if $\lim_{n\to\infty}\frac{f(n)}{g(n)}=0$ and $f(n)=\omega(g(n))$ if $\lim_{n\to\infty}\frac{f(n)}{g(n)}$ diverges}
  	\begin{align}
		\lim_{n\to\infty}\eta(n,K)=0.\notag
	\end{align}
  \item if $K\in\omega(\log n)$, then
  	\begin{align}
		\lim_{n\to\infty}\eta(n,K)=\gamma.\notag
	\end{align}
\end{enumerate}
\end{corollary}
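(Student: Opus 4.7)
The plan is to derive both parts directly from Theorem~\ref{Theorem1} together with the qualitative fact that the rate function satisfies $\Lambda(\beta,\Pi)>0$ for $\beta\in[0,\gamma)$ and $\Lambda(\gamma,\Pi)=0$; both are standard consequences of the Perron--Frobenius analysis of $\Pi_\theta$ underlying~(\ref{lambdadef}) and are implicitly used in the proof of Theorem~\ref{Theorem1}.

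Part~(1) is essentially a direct substitution. The hypothesis $K\in o(\log n)$ forces $c=\lim_{n\to\infty}K/\log n=0$, so the feasible set in~(\ref{god}) reduces to $\{\beta:0\geq\beta/\Lambda(\beta,\Pi),\ 0\leq\beta<\gamma\}$. Since $\Lambda(\beta,\Pi)>0$ on $(0,\gamma)$, the ratio $\beta/\Lambda(\beta,\Pi)$ is strictly positive there, leaving $\beta=0$ as the only feasible point, so $\lim_n\eta(n,K)=0$.

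Part~(2) requires one extra step because $c=\infty$ falls outside the stated hypotheses of Theorem~\ref{Theorem1}; my plan is to sandwich $\eta(n,K)$ between $\gamma$ and a family of asymptotic limits approaching $\gamma$. The upper bound $\limsup_n\eta(n,K)\leq\gamma$ is immediate from~(\ref{Thput}) and the renewal bound $\mathbb{E}[T_1(K)]\geq K/\gamma$. For the matching lower bound, fix any finite $c_0>0$ and let $K'_n=\lceil c_0\log n\rceil$; for $n$ large enough, $K\geq K'_n$, and if one can establish the monotonicity $\eta(n,K)\geq\eta(n,K'_n)$ at fixed $n$, then applying Theorem~\ref{Theorem1} to $K'_n$ (which satisfies $K'_n/\log n\to c_0$) yields $\lim_n\eta(n,K'_n)=\sup\{\beta:c_0\geq\beta/\Lambda(\beta,\Pi),\ 0\leq\beta<\gamma\}$. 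Sending $c_0\to\infty$ renders the constraint vacuous and the supremum approaches $\gamma$.

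The main obstacle is the monotonicity $\eta(n,K)\geq\eta(n,K'_n)$: although $T(n,K)\geq T(n,K'_n)$ holds sample-path-wise under the natural coupling, dividing by the block size reverses the direction, so a genuine amortization argument on $\mathbb{E}[T(n,K)]/K$ is needed. Should this prove awkward, the fallback I would use is a direct large-deviation bound in the style of the proof of Theorem~\ref{Theorem1}: for any $\epsilon>0$ the union bound yields $\Pr[T(n,K)>K/(\gamma-\epsilon)]\leq n\exp(-K\Lambda(\gamma-\epsilon,\Pi)/(\gamma-\epsilon)(1+o(1)))$, which vanishes under $K\gg\log n$; combining this with a Cauchy--Schwarz truncation and the crude second-moment bound $\mathbb{E}[T(n,K)^2]\leq n^2\mathbb{E}[T_1(K)^2]=O(n^2K^2)$ transfers the concentration to $\mathbb{E}[T(n,K)]$ and delivers $\liminf_n\eta(n,K)\geq\gamma-\epsilon$ for every $\epsilon>0$.
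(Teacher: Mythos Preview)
Your Part~(1) is exactly the paper's argument.

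For Part~(2), the paper does something much simpler than you propose: it applies Theorem~\ref{Theorem1} directly with $c=\infty$, observing that then the constraint $c\geq\beta/\Lambda(\beta,\Pi)$ is vacuous on $[0,\gamma)$, so the feasible set is $[0,\gamma)$ and its supremum is $\gamma$. Your worry that $c=\infty$ ``falls outside the stated hypotheses'' is understandable from the phrasing, but the paper clearly intends the extended-real limit, and an inspection of the proof of Theorem~\ref{Theorem1} shows it goes through verbatim for $c=\infty$: in the dichotomy preceding~(\ref{sit2}) one always lands in the first branch for every $\beta\in(0,\gamma)$, giving $\beta_c=\gamma$.

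Your detour is therefore unnecessary, and it has a real soft spot: the monotonicity $\eta(n,K)\geq\eta(n,K')$ for $K\geq K'$ is \emph{not} proved anywhere in the paper for general Markov channels (Theorem~\ref{MonotoneTh} handles a different relation, and only for the Gilbert--Elliott case), and you yourself flag it as the main obstacle. Your fallback large-deviation/truncation argument would indeed work and is essentially a specialization of the machinery inside the proof of Theorem~\ref{Theorem1}; it is correct but redundant once you grant that Theorem~\ref{Theorem1} already covers $c=\infty$. In short: drop the sandwich and the monotonicity step, and just invoke Theorem~\ref{Theorem1} with $c=\infty$ as the paper does.
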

\begin{proof}
1) If $K\in o(\log n)$, then $c=\lim_{n\to\infty}\frac{K}{\log n}=0$ and we have
	\begin{align}
		\left\{\beta{\bigg|}c\geq\frac{\beta}{\Lambda(\beta,\Pi)},0\leq\beta<\gamma\right\}=\left\{0\right\}.\notag
	\end{align}
	According to Theorem~\ref{Theorem1}, we get
	\begin{align}
	\lim_{n\to\infty}\eta(n,K)=\sup\{0\}=0.\notag
	\end{align}
2) If $K\in \omega(\log n)$, then $c=\lim_{n\to\infty}\frac{K}{\log n}=\infty$ and we have
\begin{align}
		\left\{\beta{\bigg|}c\geq\frac{\beta}{\Lambda(\beta,\Pi)},0\leq\beta<\gamma\right\}=[0,\gamma).\notag
\end{align}
	According to Theorem~\ref{Theorem1}, we get
	\begin{align}
	\lim_{n\to\infty}\eta(n,K)=\sup[0,\gamma)=\gamma.\notag
	\end{align}
\end{proof}
Corollary~\ref{corollary1} says that the throughput will vanish to $0$ {\color{black}as $n$ becomes large,} when $K$ does not scale as fast as $\log n$. Whereas when $K$ scales faster than $\log n$ (Or more specifically, when $K\in \omega(\log n)$), {\color{black}throughput approaches the capacity $\gamma$ of the system in the limit.} It should be noted that Theorem~\ref{Theorem1}, together with Corollary~\ref{corollary1}, are a generalized version of Theorem 1 in \cite{Swapna}, which only consider the case when $l=1$ and does not give an explicit expression for the asymptotic throughput.

As a special case when the channels are memoryless ($l=0$), we can express $\Lambda(\beta,\Pi)$ in a closed form, as shown in the corollary below.
\begin{corollary}\label{finally}
Assume that $K$ is function of $n$ and the channels are memoryless ($l=0$), we have\\ if $\lim_{n\to\infty}\frac{K}{\log n}=c$, where $c$ is a positive constant, then
\begin{align}
&\lim_{n\to\infty}\eta(n,K)=\notag\\
&\sup\left\{\beta{\bigg|}\log\frac{\beta}{\gamma}+\frac{1-\beta}{\beta}\log\frac{1-\beta}{1-\gamma}\geq\frac{1}{c},0\leq\beta<\gamma\right\}.
\end{align}
\end{corollary}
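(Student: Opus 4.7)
The plan is to specialize Theorem~\ref{Theorem1} by computing $\Lambda(\beta,\Pi)$ in closed form for the memoryless case and then algebraically rearranging the defining constraint in~(\ref{god}). By Theorem~\ref{Theorem1} we already have
\[
\lim_{n\to\infty}\eta(n,K) = \sup\bigl\{\beta \,:\, c\geq \beta/\Lambda(\beta,\Pi),\; 0\leq\beta<\gamma\bigr\},
\]
so the entire task reduces to identifying $\Lambda(\beta,\Pi)$ explicitly when $l=0$ and then rewriting the inequality in the form displayed in the corollary.

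When $l=0$ the observations $X_{ij}$ are i.i.d.\ Bernoulli$(\gamma)$, and the quantity $\Lambda(\beta,\Pi)$ becomes the classical Cram\'er rate function of a Bernoulli$(\gamma)$ variable, built from the scaled log moment generating function $\log(\gamma e^\theta + 1-\gamma)$. One can justify this either by Cram\'er's theorem applied directly to the i.i.d.\ sum $\sum_j X_{ij}$, or by realising the memoryless case as a degenerate $l=1$ Markov chain whose matrix $\Pi_\theta$ has two identical rows and hence Perron--Frobenius eigenvalue $\gamma e^\theta + 1-\gamma$. I would then evaluate the Legendre transform
\[
\Lambda(\beta,\Pi) = \sup_\theta\bigl\{\theta\beta - \log(\gamma e^\theta + 1-\gamma)\bigr\}.
\]
Setting the derivative in $\theta$ to zero yields $\beta(\gamma e^\theta + 1-\gamma) = \gamma e^\theta$, which solves to $e^{\theta^\star} = \beta(1-\gamma)/\bigl(\gamma(1-\beta)\bigr)$. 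Substituting $\theta^\star$ back, the argument of the logarithm collapses to $(1-\gamma)/(1-\beta)$, and after standard simplification one obtains
\[
\Lambda(\beta,\Pi) = \beta\log\frac{\beta}{\gamma} + (1-\beta)\log\frac{1-\beta}{1-\gamma},
\]
the Kullback--Leibler divergence between Bernoulli$(\beta)$ and Bernoulli$(\gamma)$.

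To conclude, I would divide the constraint $c \geq \beta/\Lambda(\beta,\Pi)$ through by $\beta > 0$, obtaining
\[
\frac{1}{c} \;\leq\; \frac{\Lambda(\beta,\Pi)}{\beta} \;=\; \log\frac{\beta}{\gamma} + \frac{1-\beta}{\beta}\log\frac{1-\beta}{1-\gamma},
\]
which is exactly the inequality appearing inside the supremum in the corollary; the $\beta=0$ boundary can be appended without changing the supremum. The only genuine subtlety — and the main obstacle — is interpretational: one has to reconcile the paper's $l=0$ definition $\Pi_\theta=[\gamma e^\theta]$ with the Bernoulli log-MGF $\log(\gamma e^\theta + 1-\gamma)$ that actually governs the rate function, which I would handle via the two-row $l=1$ reduction or a direct appeal to Cram\'er's theorem. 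Once that identification is made, the remaining calculus and rearrangement are routine.
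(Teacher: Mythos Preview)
Your approach is essentially the paper's: specialize Theorem~\ref{Theorem1} by computing $\Lambda(\beta,\Pi)$ explicitly for the memoryless case and then divide the constraint $c\ge\beta/\Lambda(\beta,\Pi)$ through by $\beta$. Both you and the paper arrive at the Bernoulli KL divergence
\[
\Lambda(\beta,\Pi)=\beta\log\frac{\beta}{\gamma}+(1-\beta)\log\frac{1-\beta}{1-\gamma}.
\]

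Where you differ is precisely the point you flag as the ``only genuine subtlety.'' The paper's proof simply asserts that with $\Pi_\theta=[\gamma e^\theta]$ and $\rho(\Pi_\theta)=\gamma e^\theta$, the definition~(\ref{lambdadef}) yields the above formula. Taken literally that does not work: $\sup_\theta\{\theta\beta-\log(\gamma e^\theta)\}=\sup_\theta\{\theta(\beta-1)-\log\gamma\}=+\infty$ for $\beta<1$. Your workaround---either embed the memoryless channel as a degenerate $l=1$ chain whose $\Pi_\theta$ has identical rows and Perron eigenvalue $\gamma e^\theta+1-\gamma$, or invoke Cram\'er's theorem directly for i.i.d.\ Bernoulli$(\gamma)$---is the honest way to obtain the stated $\Lambda$, and your Legendre-transform calculation is correct. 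So your proof follows the paper's route but patches a step that the paper's own argument leaves unjustified as written.
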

\begin{proof} When $l=0$, $\Pi_\theta=[\gamma e^{\theta}]$ is a degenerate matrix with a single entry and $\rho(\Pi_\theta)=\rho(\gamma e^{\theta})=\gamma e^{\theta}$. Therefore we have, according to Equation~(\ref{lambdadef}),
\begin{align}
\Lambda(\beta,\Pi)=\beta\log\frac{\beta}{\gamma}+(1-\beta)\log\frac{1-\beta}{1-\gamma}.\notag
\end{align}
\end{proof}

%\section{Maximum Achievable Throughput}
\section{{\color{black}Throughput lower bound for finite $K$ and $n$}}\label{RateRegionSection}

For all rateless coding schemes, the encoding and decoding complexity increases linearly in $K$, the size of the coding block. Moreover, the value of $K$ determines the receiver buffer size. Therefore, in reality, the value of $K$ is often limited by the decoder buffer size or the computational power of both sender and receiver. {\color{black}We then} have to consider the case when $K$ is finite and need to answer the following questions: For a given number of receivers $n$, channel statistics, and a maximum available coding block size $K$, what is maximum packet arrival rate $\lambda$ that can be supported by this system? For a specific number of receivers and channel statistics, if we are given a target packet arrival rate $\lambda$, how can we design the value of $K$ in the system such that the target arrival rate can be supported?

{\color{black}In order to answer these questions, we make a more restrictive channel assumption that the current channel state is impacted by only the channel state in the previous 1 time slot. In other words, we have the Gilbert-Elliott channel model.
%, which is shown in Figure~\ref{GilbertElliottIll}.}
%channels are memoryless, meaning that the channel states are i.i.d. across different time slots. 
%{\color{black}Based on this model, in the theorem below, we are able to show that when we keep the value of $K/\log n$ to be a constant as we increase $K$ or $n$, the throughput will follow a decreasing pattern. We leave the case when $l>0$ for future study.}
Under this model, for any receiver $1\leq i\leq n$, the channel states $\{X_{ij}\}_{j\in\mathbb{N}}$ evolve with $j$ according to a two state Markov chain as illustrated in Figure~\ref{GilbertElliottIll}. Here, $p_{01}$ and $p_{10}$ are the transition probabilities between state $0$ and state $1$, and system capacity $\gamma=p_{01}/(p_{10}+p_{01})$. Based on this model, in the theorem below, we find a lower bound on the throughput for any $n$ and $K$ in terms of the expected transmission time of a system with larger $n$ and $K$.
\begin{figure}[h]
\begin{center}
\includegraphics [scale=0.9]{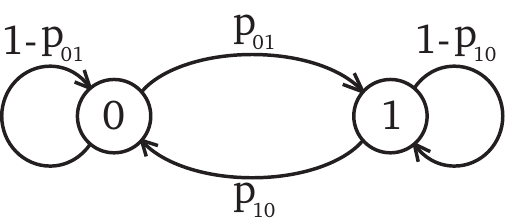}
\caption{Gilbert-Elliott Channel Model}
\label{GilbertElliottIll}
\end{center}
\end{figure}
\begin{theorem}\label{MonotoneTh}{\rm
Under the Gilbert-Elliott channel model ($l=1$) as illustrated in Figure~\ref{GilbertElliottIll}, for any $n\in\mathbb{N}$ $K\in\mathbb{N}$ and $\alpha\in\mathbb{N}$, we have\footnote{We denote an all-one vector with dimension $m$ as ${\bf1}_m$}
		\begin{align}
		\eta(n,K)>\frac{\alpha K}{\mathbb{E}\left[T(n^\alpha,\alpha(K+K_0))|\mathcal{E}={\bf1}_{n^\alpha}\right]}\notag,
		\end{align}
		where
		\begin{align}
		K_0=\min{\Bigg\{}m\geq0{\Bigg|}\sum_{d=0}^m(1-p_{10})^{d}p_{10}+p_{01}\geq1{\Bigg\}}\notag.
		\end{align}
}
\end{theorem}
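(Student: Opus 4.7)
\medskip\noindent\textbf{Proof plan.} By Equation~(\ref{Thput}), $\eta(n,K)=K/\mathbb{E}[T(n,K)]$ with the expectation taken over the stationary distribution of $\{\mathcal{E}^h\}_h$, so the theorem is equivalent to the strict comparison of expected transmission times
\[
\mathbb{E}\bigl[T(n^\alpha,\alpha(K+K_0))\bigm|\mathcal{E}=\mathbf{1}_{n^\alpha}\bigr]\;>\;\alpha\,\mathbb{E}[T(n,K)].
\]
My plan is to establish this by coupling one transmission of the $n^\alpha$-receiver system, of block size $\alpha(K+K_0)$ and started from $\mathbf{1}_{n^\alpha}$, with $\alpha$ consecutive coding blocks of the $n$-receiver system started from stationarity.

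First I would rewrite the right-hand side as $\mathbb{E}\bigl[\sum_{k=1}^\alpha T^k(n,K)\bigr]$, the expected length of $\alpha$ consecutive Markov-renewal cycles of the embedded chain $\{\mathcal{E}^h\}_h$. Second, I would partition the time axis $[0,T(n^\alpha,\alpha(K+K_0))]$ into $\alpha$ epochs, where epoch $k$ ends at the first time every one of the $n^\alpha$ receivers has accumulated at least $k(K+K_0)$ successful receptions. Third, within each epoch I would extract an embedded $n$-receiver Gilbert-Elliott subsystem by fixing all coordinates except one in the tuple labelling $[n]^\alpha$: this exposes $n$ receivers whose channels are i.i.d.\ Gilbert-Elliott chains, and the length of epoch $k$ must dominate the time for this subsystem to accumulate $K+K_0$ additional receptions.

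The constant $K_0$ is chosen precisely to absorb the mismatch between the all-good initial state $\mathbf{1}_n$ and the stationary inter-block initial state. Its defining condition simplifies to $(1-p_{10})^{K_0+1}\leq p_{01}$, which is exactly the slack needed to guarantee
\[
\mathbb{E}[T(n,K+K_0)\mid\mathcal{E}=\mathbf{1}_n]\;\geq\;\mathbb{E}[T(n,K)]
\]
via a stochastic-domination argument for Gilbert-Elliott channels: after $K_0+1$ slots, a chain started in state $1$ is dominated by one started from the stationary distribution, so the $K_0$ extra required packets more than compensate for the ``easier'' starting state. Chaining this one-block bound across the $\alpha$ epochs of the coupling yields the claimed inequality.

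The main obstacle, as I see it, is to make the epoch-based coupling rigorous. The epoch boundaries are random stopping times at which the $n^\alpha$ channel states are correlated, so the argument must invoke the strong Markov property carefully and uniformly bound the boundary-state distribution by $\mathbf{1}_n$ across all $\alpha$ epochs. Strictness of the inequality then follows essentially for free, since the $n^\alpha-n$ receivers that lie outside the embedded subsystem contribute extra waiting that the right-hand side never credits.
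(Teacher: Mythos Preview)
Your treatment of $K_0$ is correct and matches the paper's Lemma~\ref{stochasticordering}: the condition $(1-p_{10})^{K_0+1}\le p_{01}$ is precisely what gives $\sum_{d=1}^{K_0+1}T_d^{(1)}\succeq T^{(0)}$, from which $\mathbb{E}[T(n,K+K_0)\mid\mathbf{1}_n]\ge\mathbb{E}[T(n,K)\mid\mathcal{E}]$ for every $\mathcal{E}$ follows.

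The genuine gap is in the $\alpha$-fold step. Your epochs are \emph{global} stopping times, and at the end of epoch $k-1$ only the bottleneck receiver is guaranteed to sit exactly at $(k-1)(K+K_0)$ successes; every other receiver may have overshot by an arbitrary amount. So for whichever $n$-subsystem you embed by freezing $\alpha-1$ coordinates, those $n$ receivers may already be at or past $k(K+K_0)$ when epoch $k$ begins, and the epoch can end long before they have accumulated $K+K_0$ \emph{additional} receptions. Your claimed domination ``epoch $k$ length $\ge$ time for the subsystem to accumulate $K+K_0$ more'' therefore fails. You flag the boundary \emph{channel state} as the obstacle, but the real obstruction is the boundary \emph{success count}, which neither the strong Markov property nor a state-comparison with $\mathbf{1}_n$ resets.

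The paper sidesteps overshoot by never forming global epochs. It writes each receiver's completion time as a sum of per-receiver blocks $S_i^r=\sum_{j=(r-1)K+1}^{rK}T_{ij}^{(1)}$, so that $T(n^\alpha,\alpha K,\mathbf{1}_{n^\alpha})=\max_{i}\sum_{r=1}^{\alpha}S_i^r$ with $\{S_i^r\}$ i.i.d.\ over both $i$ and $r$. After reindexing $i\leftrightarrow(i_1,\dots,i_\alpha)\in\{1,\dots,n\}^\alpha$, it lower-bounds the full maximum by a greedy path: choose $i_1^*(i_2,\dots,i_\alpha)=\arg\max_{i_1}S^1_{(i_1,\dots,i_\alpha)}$, then $i_2^*=\arg\max_{i_2}S^2_{(i_1^*,i_2,\dots)}$, and so on. Each coordinate step contributes exactly $\mathbb{E}\bigl[\max_{1\le i\le n}S_i^1\bigr]=\mathbb{E}[T(n,K,\mathbf{1}_n)]$ in expectation, giving $\mathbb{E}[T(n^\alpha,\alpha K,\mathbf{1}_{n^\alpha})]>\alpha\,\mathbb{E}[T(n,K,\mathbf{1}_n)]$; combining with the $K_0$ step (applied with $K$ replaced by $K+K_0$) finishes. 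Because these ``epochs'' are indexed by each receiver's own success count rather than by a common clock, the i.i.d.\ structure is preserved exactly and there is no overshoot to correct.
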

}

%\begin{remark}
%{\color{black} For the case when the channels are correlated, the expected transmission time for a single coding block varies under different initial states $\mathcal{E}$. It is technically difficult to obtain the steady state distribution of initial state $\mathcal{E}$ across different transmissions. Therefore, it is not clear whether this inequality will hold for $l>0$.}
%\end{remark}
{\color{black}
\begin{remark}
Observe that $K_0$ is independent of the choice of $n$ and $K$ and is only a function of channel dynamics. $K_0=0$ if and only if $p_{01}+p_{10}\geq1$.
\end{remark}
}
\begin{proof}
see Section~\ref{MonotoneProof}.
\end{proof}
%{\color{black} When we make the more restrictive assumption that the channels are memoryless, it turns out the above theorem reduces to a simpler form, as shown in the following corollary.}
{\color{black} When the channels are memoryless, the above theorem reduces to a simpler form, as shown in the following corollary.}
\begin{corollary}\label{MonotoneThMemoryless}{\rm
When the channels are memoryless ($l=0$), for any $n\in\mathbb{N},K\in\mathbb{N}$ and $\alpha\in\mathbb{N}$, we have 
\begin{align}
\eta(n,K)>\eta(n^\alpha,\alpha K)\notag.
\end{align}
}\end{corollary}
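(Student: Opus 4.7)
The plan is to derive the corollary directly from Theorem~\ref{MonotoneTh} by specializing to the memoryless Gilbert-Elliott setting and showing that the two parameters that distinguish the general bound from the desired clean form both collapse. Specifically, memoryless channels correspond to $p_{01}=\gamma$ and $p_{10}=1-\gamma$, so $p_{01}+p_{10}=1$. This immediately forces $K_0=0$ by the remark following the theorem (or by plugging $m=0$ into the defining infimum: $p_{10}+p_{01}\geq 1$ is already tight). Hence $\alpha(K+K_0)=\alpha K$ on the right-hand side of the bound in Theorem~\ref{MonotoneTh}.

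Next I would eliminate the conditioning on the initial state $\mathcal{E}=\mathbf{1}_{n^\alpha}$. Under $l=0$ the random variables $\{X_{ij}\}$ are i.i.d.~Bernoulli$(\gamma)$ in $j$ and independent across $i$, so $T_i(K)$ and hence $T(n^\alpha,\alpha K)$ are independent of $\mathcal{E}$ by construction. Therefore $\mathbb{E}\!\left[T(n^\alpha,\alpha K)\,\big|\,\mathcal{E}=\mathbf{1}_{n^\alpha}\right]=\mathbb{E}[T(n^\alpha,\alpha K)]$.

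Finally, I would invoke the throughput expression~(\ref{Thput}) applied to the pair $(n^\alpha,\alpha K)$, which gives $\eta(n^\alpha,\alpha K)=\alpha K/\mathbb{E}[T(n^\alpha,\alpha K)]$ (again the outer expectation over the steady-state distribution of $\{\mathcal{E}^h\}_h$ is moot when $l=0$). Combining these substitutions in the conclusion of Theorem~\ref{MonotoneTh} yields
\begin{align}
\eta(n,K) \;>\; \frac{\alpha K}{\mathbb{E}[T(n^\alpha,\alpha K)]} \;=\; \eta(n^\alpha,\alpha K),\notag
\end{align}
which is exactly the statement of the corollary.

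Since the entire argument is a specialization of a theorem whose proof is deferred to Section~\ref{MonotoneProof}, there is essentially no obstacle here; the only point deserving care is verifying that $K_0=0$ from the precise definition (checking the $m=0$ term) and noting that the initial-state conditioning, which is cosmetically required by the general statement, carries no probabilistic content when the channel has no memory.
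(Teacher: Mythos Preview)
Your proposal is correct and follows essentially the same approach as the paper: specialize Theorem~\ref{MonotoneTh} to the memoryless case $p_{01}=\gamma$, $p_{10}=1-\gamma$, observe that $K_0=0$, drop the conditioning on $\mathcal{E}$ since $T(n^\alpha,\alpha K)$ is independent of the initial state when $l=0$, and then identify the resulting expression with $\eta(n^\alpha,\alpha K)$ via~(\ref{Thput}).
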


\begin{remark}
While Theorem~\ref{Theorem1} tells us that in order to achieve nonzero throughput, we can double the coding block size $K$ for every quadratic increase of $n$, which is to make $K/\log n$ a fixed value, it does not tell us anything about how the throughput will converge as $n$ approaches infinity. This corollary indicates that under the memoryless channel assumption, if we adapt the coding block size $K$ with the increase of network size $n$ in a way that $K/\log n$ is kept as a fixed value, then the throughput will follow a decreasing pattern before it reaches the asymptotic throughput.
\end{remark}

{\color{black}\begin{proof}
	The memoryless channels can be considered as a special case of the Gilbert-Elliott channels when $p_{10}=1-\gamma$, $p_{01}=\gamma$ and $K_0=0$, then by Theorem~\ref{MonotoneTh},
	\begin{align}
		\eta(n,K)&>\frac{\alpha K}{\mathbb{E}\left[T(n^\alpha,\alpha K)|\mathcal{E}={\bf1}_{n^\alpha}\right]}\notag\\
		&=\frac{\alpha K}{\mathbb{E}[T(n^\alpha,\alpha K)]}=\eta(n^\alpha,\alpha K)\notag.
	\end{align}
	The last equation follows by noting that $T(n^\alpha,\alpha K)$ is independent of initial state $\mathcal{E}$ when the channels are memoryless.
\end{proof}
}

By the help of the Theorem~\ref{Theorem1} and Theorem~\ref{MonotoneTh}, we can get a lower bound on the maximum stable throughput that can be achieved for any finite values of coding block size $K$ and network size $n$, as shown in the theorem below.
{\color{black}\begin{theorem}\label{LowerBoundTheorem}{\rm
For a broadcast network with $n$ receivers, and coding block size $K$, under the Gilbert-Elliott channel model ($l=1$), the throughput is lower bounded by
\begin{align}
\eta(n,K)>\frac{K}{K+K_0}\mathcal{R}\left(\frac{K+K_0}{\log n}\right),\notag
\end{align}
and thus the system with packet arrival rate $\lambda$ is stable if
%\begin{align}
%\lambda\leq\mathcal{R}\left(\frac{\log n}{K}\right),\notag
%\end{align}
\begin{align}
\lambda\leq\frac{K}{K+K_0}\mathcal{R}\left(\frac{K+K_0}{\log n}\right),\notag
\end{align}
where
%\begin{align}
%\mathcal{R}\left(r\right)=\sup\left\{\beta{\Bigg|}\log\frac{\beta}{\gamma}+\frac{1-\beta}{\beta}\log\frac{1-\beta}{1-\gamma}\geq r,0\leq\beta<\gamma\right\}.\notag
%\end{align}
\begin{align}
\mathcal{R}\left(r\right)=\sup\left\{\beta{\bigg|}r\geq\frac{\beta}{\Lambda(\beta,\Pi)},0\leq\beta<\gamma\right\},\text{ and}\notag\\
K_0=\min{\Bigg\{}m\geq0{\Bigg|}\sum_{d=0}^m(1-p_{10})^{d}p_{10}+p_{01}\geq1{\Bigg\}}\notag.
\end{align}
}\end{theorem}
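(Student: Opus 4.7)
The plan is to combine Theorem~\ref{MonotoneTh}, which bounds $\eta(n,K)$ from below in terms of the expected transmission time of a scaled system, with the asymptotic throughput formula of Theorem~\ref{Theorem1}. Fix $n$ and $K$ and let $\alpha\in\mathbb{N}$ be a free parameter. Theorem~\ref{MonotoneTh} gives
\begin{align}
\eta(n,K) > \frac{\alpha K}{\mathbb{E}[T(n^\alpha,\alpha(K+K_0))\mid\mathcal{E}=\mathbf{1}_{n^\alpha}]} = \frac{K}{K+K_0}\cdot\frac{\alpha(K+K_0)}{\mathbb{E}[T(n^\alpha,\alpha(K+K_0))\mid\mathcal{E}=\mathbf{1}_{n^\alpha}]}.\notag
\end{align}
Writing $N_\alpha:=n^\alpha$ and $K_\alpha:=\alpha(K+K_0)$, the ratio $K_\alpha/\log N_\alpha=(K+K_0)/\log n$ is independent of $\alpha$, so the scaled sequence $(N_\alpha,K_\alpha)$ is precisely the regime in which Theorem~\ref{Theorem1} applies with limiting ratio $c=(K+K_0)/\log n$ and asymptotic throughput $\mathcal{R}((K+K_0)/\log n)$. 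The goal is to drive $\alpha\to\infty$ in the display above and identify its limit with $\frac{K}{K+K_0}\mathcal{R}((K+K_0)/\log n)$.

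The first technical step is to eliminate the conditioning on $\mathcal{E}=\mathbf{1}_{N_\alpha}$. Because starting every channel in the ``good'' state can only shorten the transmission time, I would show
\begin{align}
\mathbb{E}[T(N_\alpha,K_\alpha)\mid\mathcal{E}=\mathbf{1}_{N_\alpha}]\le\mathbb{E}[T(N_\alpha,K_\alpha)],\notag
\end{align}
which together with the defining identity (\ref{Thput}) gives $K_\alpha/\mathbb{E}[T(N_\alpha,K_\alpha)\mid\mathcal{E}=\mathbf{1}_{N_\alpha}]\ge\eta(N_\alpha,K_\alpha)$. Theorem~\ref{Theorem1}, applied to $(N_\alpha,K_\alpha)$ with $c=(K+K_0)/\log n$, then yields $\lim_{\alpha\to\infty}\eta(N_\alpha,K_\alpha)=\mathcal{R}((K+K_0)/\log n)$.

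To upgrade the resulting $\ge$ bound into the strict inequality claimed in the theorem, I would invoke the observation that, along this scaling sequence, the throughput converges to its asymptotic value strictly \emph{from above}: this is exactly Corollary~\ref{MonotoneThMemoryless} in the memoryless case, and its Gilbert--Elliott analogue follows by iterating the coupling in the proof of Theorem~\ref{MonotoneTh}, so $\eta(N_\alpha,K_\alpha)>\mathcal{R}((K+K_0)/\log n)$ for every finite $\alpha$. Chaining the strict inequality from Theorem~\ref{MonotoneTh} with the two intermediate bounds gives
\begin{align}
\eta(n,K) > \frac{K}{K+K_0}\cdot\frac{K_\alpha}{\mathbb{E}[T(N_\alpha,K_\alpha)\mid\mathcal{E}=\mathbf{1}_{N_\alpha}]} \ge \frac{K}{K+K_0}\eta(N_\alpha,K_\alpha) > \frac{K}{K+K_0}\mathcal{R}\left(\frac{K+K_0}{\log n}\right),\notag
\end{align}
which is the stated bound. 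The stability claim for arrival rate $\lambda$ is then an immediate consequence of (\ref{Thput}): $\eta(n,K)$ is the service rate of the equivalent single-server queue fed by i.i.d.\ arrivals, so any $\lambda$ at or below this rate is stable. I expect the main obstacle to be the third step, namely the Gilbert--Elliott version of the monotonicity; the memoryless argument of Corollary~\ref{MonotoneThMemoryless} does not immediately extend because the all-ones initial state conditioning appearing in Theorem~\ref{MonotoneTh} has to be carefully accounted for when iterating, and this is where the $+K_0$ correction in the denominator ultimately comes from.
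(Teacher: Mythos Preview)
Your overall strategy---apply Theorem~\ref{MonotoneTh}, then send $\alpha\to\infty$ along the sequence $(N_\alpha,K_\alpha)=(n^\alpha,\alpha(K+K_0))$ so that $K_\alpha/\log N_\alpha$ stays equal to $(K+K_0)/\log n$, and identify the limit via Theorem~\ref{Theorem1}---is exactly the route the paper takes. The difference, and the source of a genuine gap, is in how you deal with the conditioning on $\mathcal{E}=\mathbf{1}_{N_\alpha}$.

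Your intermediate claim $\mathbb{E}[T(N_\alpha,K_\alpha)\mid\mathcal{E}=\mathbf{1}_{N_\alpha}]\le\mathbb{E}[T(N_\alpha,K_\alpha)]$, justified by ``starting every channel in the good state can only shorten the transmission time,'' is false for Gilbert--Elliott channels in general. A direct computation gives $\mathbb{E}[T^{(1)}]=1+p_{10}/p_{01}$ and $\mathbb{E}[T^{(0)}]=1/p_{01}$, so $\mathbb{E}[T^{(1)}]>\mathbb{E}[T^{(0)}]$ whenever $p_{10}+p_{01}>1$ (precisely the case $K_0=0$): in the negatively correlated regime, starting in state~$1$ makes it \emph{more} likely to be in state~$0$ next, and the inequality goes the wrong way. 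This breaks the middle ``$\ge$'' in your chain. The ``converges strictly from above'' step you flag as the main obstacle is then moot---and note that iterating Theorem~\ref{MonotoneTh} on $(N_\alpha,K_\alpha)$ would in any case produce the drifting ratio $(K_\alpha+K_0)/\log N_\alpha\neq(K+K_0)/\log n$ when $K_0>0$, so that route does not close cleanly either.

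The paper avoids both detours. Inside the proof of Theorem~\ref{Theorem1}, Equation~(\ref{independentofe}) establishes that for \emph{any} fixed initial state $\mathcal{E}$ one has $\lim_{n\to\infty}K/\mathbb{E}[T(n,K)\mid\mathcal{E}]=\mathcal{R}(c)$, the limit being independent of $\mathcal{E}$. Hence the paper simply lets $\alpha\to\infty$ directly in the conditional quantity coming out of Theorem~\ref{MonotoneTh} and reads off
\[
\lim_{\alpha\to\infty}\frac{\alpha(K+K_0)}{\mathbb{E}\bigl[T(n^\alpha,\alpha(K+K_0))\mid\mathcal{E}=\mathbf{1}_{n^\alpha}\bigr]}=\mathcal{R}\!\left(\frac{K+K_0}{\log n}\right).
\]
No comparison between conditional and unconditional expectations, no Gilbert--Elliott analogue of Corollary~\ref{MonotoneThMemoryless}, is needed. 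Replacing your steps~2 and~4 by this single observation gives the argument the paper actually uses.
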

\begin{remark}
As a special case, when the channels are memoryless, we have $p_{10}=1-\gamma$ and $p_{01}=\gamma$. It is easy to obtain that $K_0=0$. Therefore,
\begin{align}
&\eta(n,K)\notag\\
>&\sup\left\{\beta{\bigg|}\log\frac{\beta}{\gamma}+\frac{1-\beta}{\beta}\log\frac{1-\beta}{1-\gamma}\geq\frac{\log n}{K},0\leq\beta<\gamma\right\},\notag
\end{align}
and the system with packet arrival rate $\lambda$ is stable if
%\begin{align}
%\lambda\leq&\mathcal{R}\left(\frac{K}{\log n}\right)\notag\\
%=&\sup\left\{\beta{\bigg|}\log\frac{\beta}{\gamma}+\frac{1-\beta}{\beta}\log\frac{1-\beta}{1-\gamma}\geq\frac{1}{c},0\leq\beta<\gamma\right\}\notag.
%\end{align}
\begin{align}
\lambda\leq\sup\left\{\beta{\bigg|}\log\frac{\beta}{\gamma}+\frac{1-\beta}{\beta}\log\frac{1-\beta}{1-\gamma}\geq\frac{\log n}{K},0\leq\beta<\gamma\right\}\notag.
\end{align}
\end{remark}
}
\begin{proof}
From Equation~(\ref{lemma1result}) in Lemma~\ref{LemmaTK} we can see that when $K$ and $n$ are finite, the transmission time of a coding block $T(n,K)$ is light-tail distributed, meaning that it has finite variance. Then according to \cite{Lyapunov}, using Lyapunov method we know that the queue will be stable if the traffic intensity of this queue, which is defined as the packet arrival rate $\lambda$ over the service rate, is less than $1$. Therefore, the queue will be stable if the arrival rate $\lambda$ satisfies
%\begin{align}
%\lambda&<\sup\left\{\mu{\bigg|}\frac{\mu}{K/\mathbb{E}[T(n,K)]}<1\right\}\notag\\
%	&=\frac{K}{\mathbb{E}[T(n,K)]}=\eta(n,K).\label{end1}
%\end{align}
{\color{black}
\begin{align}
\lambda&<\sup\left\{\mu{\bigg|}\frac{\mu}{K/\mathbb{E}[T(n,K)]}<1\right\}\notag\\
	&=\eta(n,K).\label{end1}
\end{align}
By Theorem~\ref{MonotoneTh} we know that, for any integer values of $\alpha$
\begin{align}
\eta(n,K)>\frac{\alpha K}{\mathbb{E}\left[T(n^\alpha,\alpha(K+K_0))|\mathcal{E}={\bf1}_{n^\alpha}\right]},\notag
\end{align}
implying that
%\begin{align}
%\eta(n,K)>\lim_{\alpha\to\infty}\eta(n^\alpha,\alpha K).\label{end2}
%\end{align}
\begin{align}
&\eta(n,K)\notag\\
>&\lim_{\alpha\to\infty}\frac{K}{K+K_0}\frac{\alpha (K+K_0)}{\mathbb{E}\left[T(n^\alpha,\alpha(K+K_0))|\mathcal{E}={\bf1}_{n^\alpha}\right]}.\label{end2}
\end{align}
Since $\alpha (K+K_0)/\log n^\alpha=(K+K_0)/\log n$ for any value of $\alpha$, then by Equation~(\ref{independentofe}) in the proof of Theorem~\ref{Theorem1}, we get
%applying Theorem~\ref{Theorem1} and Corollary~\ref{finally}, we get
\begin{align}
&\lim_{\alpha\to\infty}\frac{\alpha (K+K_0)}{\mathbb{E}\left[T(n^\alpha,\alpha(K+K_0))|\mathcal{E}={\bf1}_{n^\alpha}\right]}\notag\\
=&\sup\left\{\beta{\bigg|}\frac{K+K_0}{\log n}\geq\frac{\beta}{\Lambda(\beta,\Pi)},0\leq\beta<\gamma\right\}\notag\\
=&\mathcal{R}\left(\frac{K+K_0}{\log n}\right),\notag
\end{align}
}which, by combining Equation~(\ref{end1}) and Equation~(\ref{end2}), completes the proof.
\end{proof}

In order to compare this lower bound on the maximum achievable rate with the existing bounds given in \cite{Cogill} and \cite{CogillBrooke}, we restate Theorem 2 in \cite{Cogill} and Theorem 7 in \cite{CogillBrooke} as the following.
{\color{black}
\begin{theorem}[Theorem 2 in \cite{Cogill} and Theorem 7 in \cite{CogillBrooke}]\label{CogillShrader}
In a broadcast network with $n$ receivers, coding block length $K$ and packet arrive rate $\lambda$,
 
1) when the channels are memoryless ($l=0$) with erasure probability $1-\gamma$ and $K>16$, the system is stable if
\begin{align}
\lambda<\frac{(1-\gamma)K}{K+(\log n+0.78)\sqrt{K}+2.61}.\notag
\end{align}

2) For Gilbert-Elliott channels ($l=1$) with state transition probability $p_{10}$ and $p_{01}$, when $1-p_{10}-p_{01}\geq0$ and $K\geq 21\log n-4$, the system is stable if
\begin{align}
\lambda<\frac{p_{01}K}{K+2\sqrt{(0.78K+3.37)\log n}+2.61}.\notag
\end{align}
\end{theorem}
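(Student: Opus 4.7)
Since this statement is quoted verbatim from \cite{Cogill} and \cite{CogillBrooke}, my plan is to re-derive the skeleton of their arguments rather than to invent a new proof; the restatement is included only for numerical comparison with Theorem~\ref{LowerBoundTheorem}.

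The common starting point is the same Foster--Lyapunov observation already used in the proof of Theorem~\ref{LowerBoundTheorem}: stability holds whenever $\lambda < K/\mathbb{E}[T(n,K)]$, so it suffices to produce an explicit upper bound on $\mathbb{E}[T(n,K)]$ matching each denominator. The key identity I would exploit is $\{T_i(K) > t\} = \bigl\{\sum_{j=1}^{t} X_{ij} < K\bigr\}$, which converts a lower-tail bound on the received-packet count into an upper-tail bound on the per-receiver transmission time; a union bound over the $n$ receivers then yields $\Pr[T(n,K) > t] \leq n\,\Pr[T_1(K) > t]$.

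For part~1 I would apply a sharp Bennett- or Chernoff-type inequality to the binomial success count, of the form $\Pr[T_1(K) > t] \leq \exp\bigl(-c\,(t(1-\gamma)-K)^2/(t(1-\gamma))\bigr)$, pick $t^{\star}$ so that $n$ times this tail is small, and write $\mathbb{E}[T(n,K)] \leq t^{\star} + \int_{t^{\star}}^{\infty} n\,\Pr[T_1(K)>s]\,ds$. Controlling the residual integral via the geometric-type decay of $T_1(K)$ then produces an additive correction of order $\sqrt{K \log n}$ on top of the nominal $K/(1-\gamma)$, with the explicit constants $0.78$ and $2.61$ emerging from optimizing the Chernoff exponent and evaluating the residual integral under the hypothesis $K > 16$. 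For part~2 I would follow the same outline but replace the i.i.d.\ Chernoff bound by a Markov-chain concentration inequality (such as Lezaud's or Gillman's), whose exponent is weakened by the spectral gap $1-\lambda_2(\Pi) = p_{01}+p_{10}$; this gap is nonnegative under the stated hypothesis, and the requirement $K \geq 21\log n - 4$ is precisely the regime in which the weakened exponent is still strong enough for the union bound to close.

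The hard part will not be the high-level structure but the bookkeeping of the explicit constants: different versions of the Bennett/Chernoff inequality and different Markov-chain concentration bounds give slightly different prefactors, so reproducing the exact numbers $0.78$, $2.61$, and $3.37$ requires following the particular choices made in \cite{Cogill,CogillBrooke}. A proof that captures only the correct scaling $\mathbb{E}[T(n,K)] = K/(1-\gamma) + O(\sqrt{K \log n})$ would be considerably shorter and would already suffice for qualitative comparison with Theorem~\ref{LowerBoundTheorem}.
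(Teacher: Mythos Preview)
The paper does not prove this theorem at all: it is simply restated from \cite{Cogill} and \cite{CogillBrooke} for the purpose of numerical comparison with Theorem~\ref{LowerBoundTheorem}, and no argument is supplied. You correctly identify this context in your opening sentence, so in that sense your proposal matches the paper exactly---there is nothing to compare against.

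Your additional sketch of how the original proofs in \cite{Cogill,CogillBrooke} presumably proceed (Foster--Lyapunov reduction to bounding $\mathbb{E}[T(n,K)]$, conversion of $\{T_i(K)>t\}$ to a lower-tail event on the success count, union bound over receivers, Chernoff/Bennett concentration for the i.i.d.\ case and a spectral-gap-based Markov concentration inequality for the Gilbert--Elliott case, then a threshold-plus-tail-integral decomposition of the expectation) is a plausible reconstruction of the standard route to such bounds. As you yourself note, the specific constants $0.78$, $2.61$, $3.37$ and the side conditions $K>16$, $K\geq 21\log n-4$ are artifacts of particular inequality choices in the cited papers and cannot be recovered without consulting them directly; but since the present paper neither needs nor claims those derivations, this is not a gap in your treatment.
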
}
For ease of notation let us denote the bounds given in Theorem~\ref{CogillShrader} as the {\it CSE bound 1} and {\it CSE bound 2} respectively using the initials of the authors' last name.

{\color{black}
%Firstly we should notice that CSE bound is only valid when $K>16$, while there is no such restriction for our bound. 
Firstly we should note that the CSE bounds are only valid when $K$ and $n$ satisfy certain conditions, while our bound is valid for any finite values of $K$ and $n$.
Secondly, our bound converges to the asymptotic throughput in the sense that as $n$ approaches infinity while keeping $K/\log n$ as a constant $c$, our bound on the maximum achievable rate will converge to the asymptotic throughput with parameter $c$. Or more specifically,}
%\begin{align}
%\lim_{n\to\infty}\eta(n,K)=\lim_{n\to\infty}\mathcal{R}\left(\frac{\log n}{K}\right)=\mathcal{R}\left(\frac{1}{c}\right),
%\end{align}
{\color{black}
\begin{align}
&\lim_{n\to\infty}\frac{K}{K+K_0}\mathcal{R}\left(\frac{K+K_0}{\log n}\right)\notag\\
=&\mathcal{R}\left(\lim_{n\to\infty}\frac{K+K_0}{\log n}\right)=\mathcal{R}\left(c\right)=\lim_{n\to\infty}\eta(n,K),\label{AsympTight}
\end{align}
}which can be seen from Theorem~\ref{Theorem1} and Theorem~\ref{LowerBoundTheorem}.
 However, the CSE bounds are not asymptotically tight. When we keep the ratio $K/\log n$ to be a constant $c$, as $n$ or $K$ approaches infinity, CSE bound 1 even becomes trivial (approach $0$), which can be seen from the equation below.
\begin{align}
&\lim_{K\to\infty}\frac{(1-\gamma)K}{K+(\log n+0.78)\sqrt{K}+2.61}\notag\\
=&\lim_{K\to\infty}\frac{(1-\gamma)}{1+(1/c+0.78/K)\sqrt{K}+2.61/K}\notag\\
=&0\label{cse0}.
\end{align}
Next, in Section~\ref{Simulation}, we show that our bound outperforms the CSE bounds under various simulation settings.

\section{Simulation}\label{Simulation}
In this Section, we conduct simulation experiments to verify our main results. 
\subsection{Example 1}
\begin{figure}[h]
\begin{center}
\includegraphics [width=3.2in]{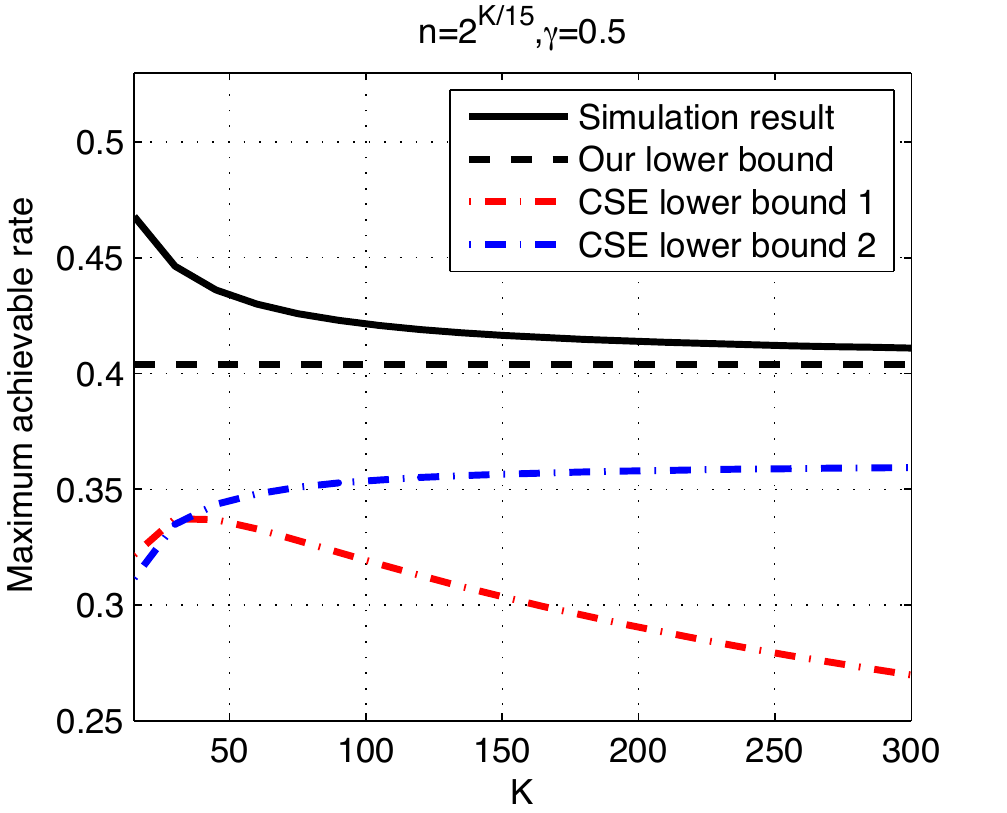}
\caption{Illustration for example 1}
\label{ex1fig1}
\end{center}
\end{figure}
\begin{figure}[h]
\begin{center}
\includegraphics [width=3.2in]{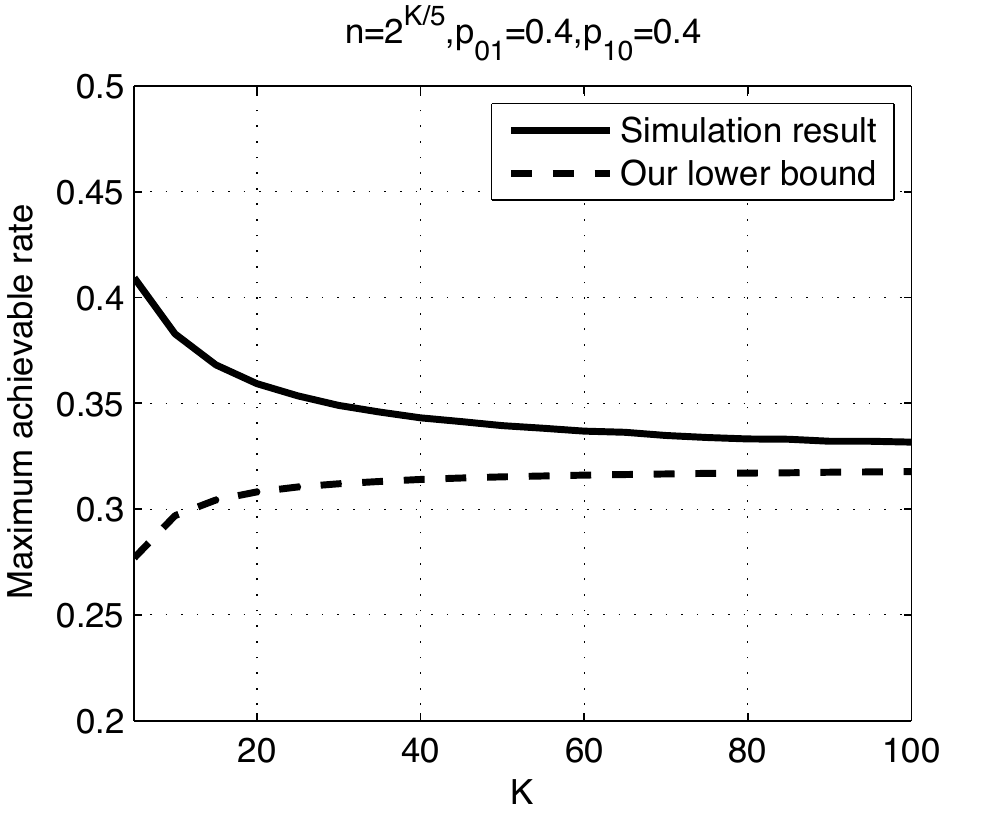}
\caption{Illustration for example 2}
\label{ex1fig2}
\end{center}
\end{figure}
This example verifies Theorem~\ref{Theorem1}, Corollary~\ref{MonotoneThMemoryless}, and Theorem~\ref{LowerBoundTheorem}. We choose a memoryless channel with $\gamma=0.5$. By keeping $K/\log n$ as a constant $15/\log 2$, we change $K$ from $5$ to $300$ and calculate the maximum achievable rate, which is $\eta(n,K)$, through simulations for each pair of $(K,n)$. Since the value of our bound is a function of the ratio $K/\log n$, in this case, it is a constant for all $K$ and is equal to the asymptotic throughput with parameter $15/\log 2$. From Figure~\ref{ex1fig1} we can see that as $K$ approaches infinity, the maximum achievable rate converges to our lower bound (which is also the asymptotic throughput in this case) in a decreasing manner, which validates Theorem~\ref{Theorem1}, Corollary~\ref{MonotoneThMemoryless} and Theorem~\ref{LowerBoundTheorem}.

In this case, we also plot the CSE lower bounds given by Theorem 4. From the figure we can see that our bounds outperforms the CSE lower bounds. CSE bound 1 gradually approaches zero as indicated by Equation~(\ref{cse0}) while our bound is a constant value and asymptotically tight as shown in Equation~(\ref{AsympTight}).
{\color{black}\subsection{Example 2}\label{Example2}
In order to verify Theorem~\ref{LowerBoundTheorem} under the Gilbert-Elliott channel model, we choose the state transition probability $p_{10}=p_{01}=0.4$. It is easy to obtain that $\gamma=0.5$ and $K_0=1$. By keeping $K/\log n$ as a constant $5/\log 2$ and changing $K$ from $5$ to $100$, we plot in Figure~\ref{ex1fig2} both the simulation result of the maximum achievable rate and our lower bound shown in Theorem~\ref{MonotoneTh}. Again, as we can see from the figure, the lower bound becomes tighter as $K$ increases. Eventually the maximum achievable rate will converge to the lower bound as shown in Equation~(\ref{AsympTight}). However, neither of the CSE lower bounds is valid under this system setting. 
}
\subsection{Example 3}\label{Example3}
In this example, we conduct three set of experiments under the memoryless channel assumption with different values of $K$ as a function of $n$, and show that our bound outperforms the CSE bound in all these simulation settings.

\begin{figure*}
\centering
\subfloat[n=K]{\includegraphics[width=2.3in]{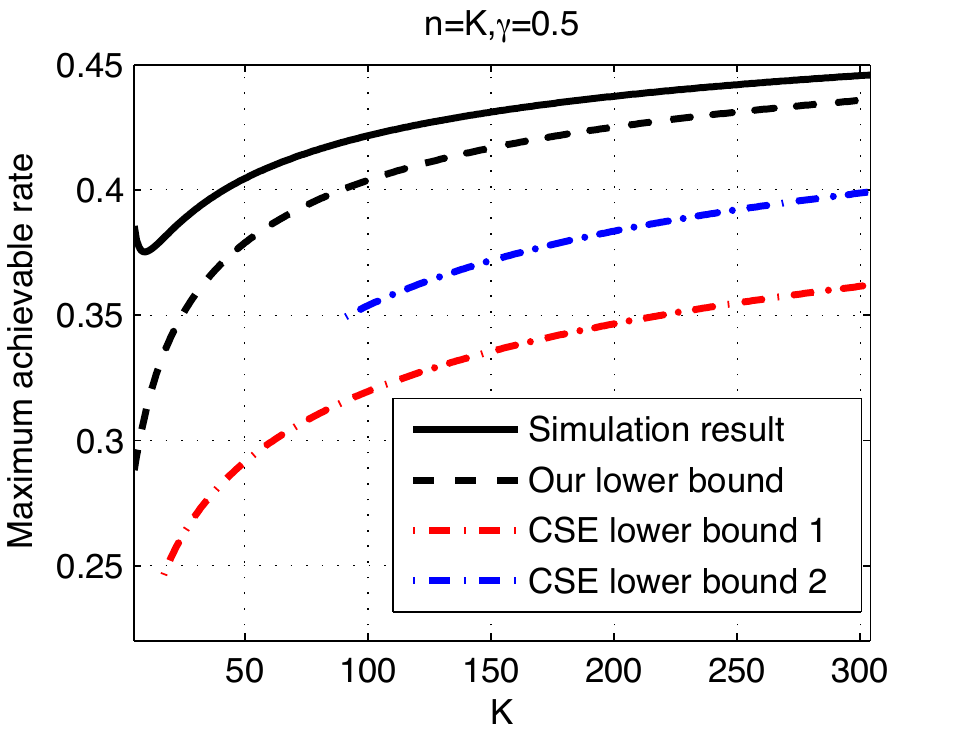} \label{ex2fig1}}
\subfloat[n=10]{\includegraphics[width=2.3in]{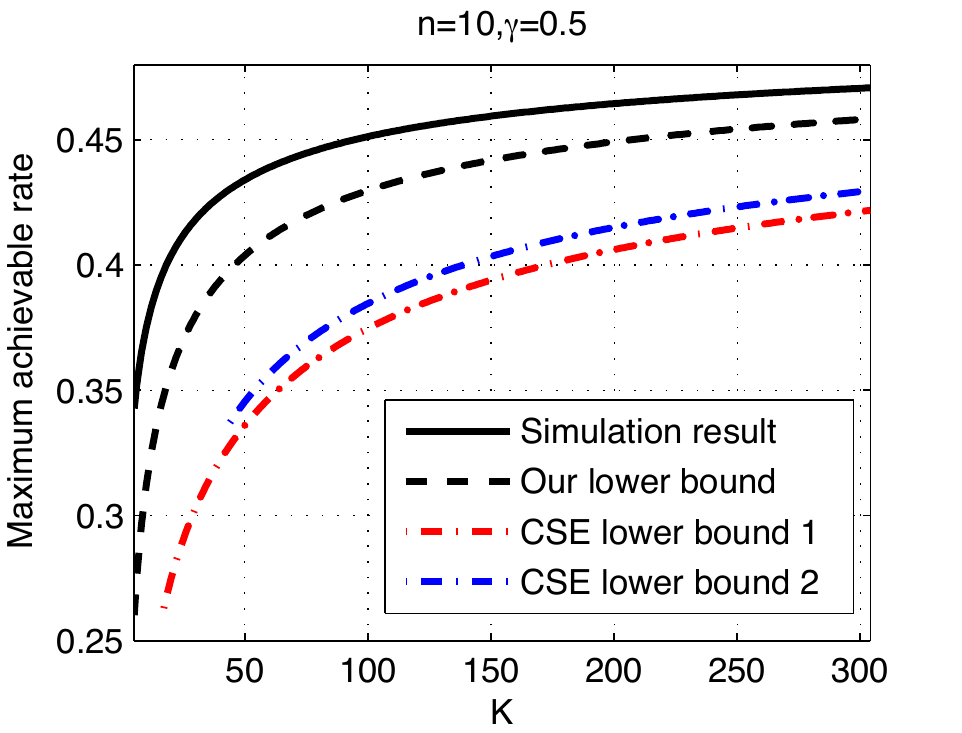} \label{ex2fig2}}
\subfloat[K=80]{\includegraphics[width=2.3in]{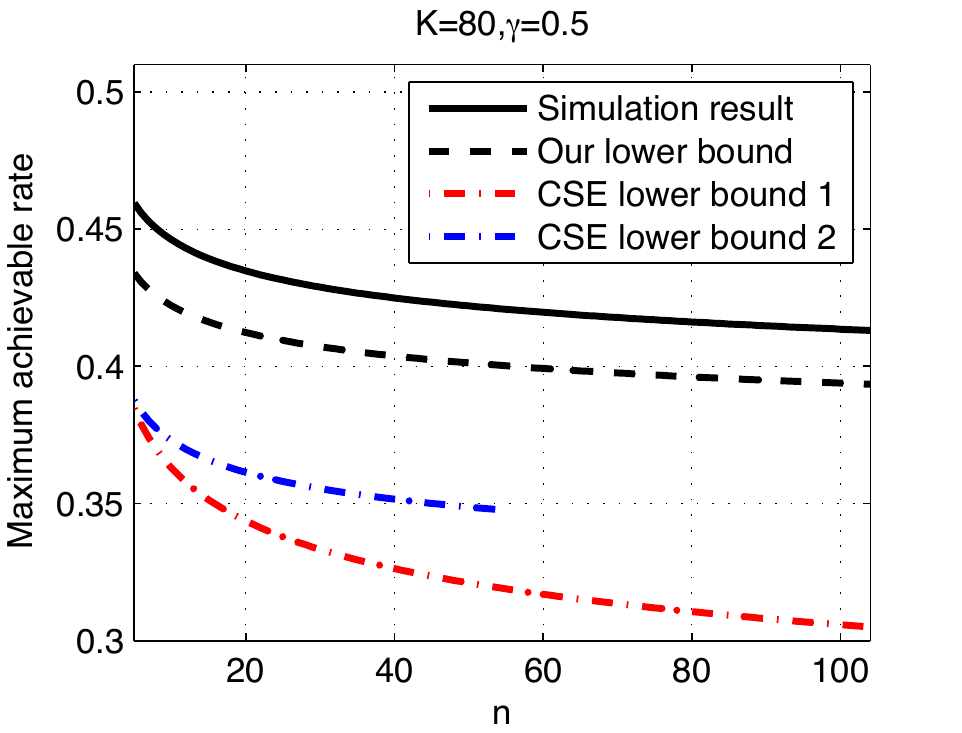} \label{ex2fig3}}\\
\caption{Illustration for example 3}
\label{fig:fourfigures}
\end{figure*}

In the first case, we set the coding block size $K$ to be the same as the network size $n$ and change $n$ from $5$ to $300$. We plot the simulation result of the maximum achievable rate as well as our bound and the CSE bound in Figure~\ref{ex2fig1}, since in this case $K$ scales faster than $\log n$, the achievable rate will approach system capacity $\gamma$ as the network size $n$ grows.

In the second case, we assume that the number of receivers is fixed to be $10$ and we increase coding block size $K$ from $5$ to $300$. The simulations result, together with the two bounds, are plotted in Figure~\ref{ex2fig2}. In this case, the achievable rate will also approach system capacity $\gamma$ as $n$ increases.

In the final case, as shown in Figure~\ref{ex2fig3}, we keep the coding block size to be a constant $80$ and increase the number of receivers from $5$ to $100$. Since $K$ does not increase with $\log n$ at all, the achievable rate will vanish to $0$ as $n$ grows. 

From Figures~\ref{ex1fig1}, \ref{ex2fig1}, \ref{ex2fig2} and \ref{ex2fig3}, we can see that our bound obtained by Theorem~\ref{LowerBoundTheorem} is significantly better than the lower bounds achieved in \cite{Cogill} and \cite{CogillBrooke} in all these four different cases.

\section{Proofs}\label{Proofs}
\subsection{Proof of Theorem~\ref{Theorem1}}\label{p1}
In order to prove Theorem~\ref{Theorem1}, we first need the following lemmas (Lemma~\ref{LemmaTK}, Lemma~\ref{LemmaKn} and Lemma~\ref{previtali}).
\begin{lemma}\label{LemmaTK}
{\rm
For any $\beta\in(0,1)$ and any values of $\mathcal{E}$, we have
	\begin{align}
		&\Pr\left[T(n,K)>\frac{k}{\beta}{\Big|}K=k,\mathcal{E}\right]\notag\\
		=&1-\left(1-e^{-\frac{k}{\beta}\Lambda(\beta,\Pi){\bf1}(\beta<\gamma)+g(\beta,k,\mathcal{E})}\right)^n,\label{lemma1result}
	\end{align}
where
\begin{align}
	g(\beta,k,\mathcal{E})\in\left\{
		\begin{array}{ll}
			o(k)\text{ as }k\to\infty&\text{if }\beta<\gamma\\
			o(1)\text{ as }k\to\infty&\text{if }\beta>\gamma
		\end{array}
	\right.\notag.
\end{align}
}
\end{lemma}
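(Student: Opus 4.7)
The plan is to decompose $\Pr[T(n,k)>k/\beta\mid K=k,\mathcal{E}]$ into a product of per-receiver tail probabilities and then apply the large-deviation theorem for finite-state Markov chains to each factor. Since the channels are i.i.d.\ across receivers and $T(n,k)=\max_i T_i(k)$, conditioning on $\mathcal{E}=(\mathcal{E}_1,\ldots,\mathcal{E}_n)$ gives
\begin{align*}
\Pr\!\left[T(n,k)>\tfrac{k}{\beta}\,\Big|\,K{=}k,\mathcal{E}\right]=1-\prod_{i=1}^n\Pr\!\left[T_i(k)\leq\tfrac{k}{\beta}\,\Big|\,\mathcal{E}_i\right].
\end{align*}
By Definition~\ref{TiK}, $\{T_i(k)>k/\beta\}$ is exactly the event $\{S_{i,m}<k\}$ with $S_{i,m}:=\sum_{j=1}^m X_{ij}$ and $m=\lfloor k/\beta\rfloor$; equivalently, the empirical mean $S_{i,m}/m$ falls below $\beta\approx k/m$. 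Since $S_{i,m}/m\to\gamma$ almost surely by ergodicity of $\Pi$, this is a rare event when $\beta<\gamma$ and a typical event when $\beta>\gamma$.

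I would next invoke the standard Cram\'er/G\"artner-Ellis theorem for the empirical mean of the ergodic finite-state Markov chain $\{\mathcal{H}_{im}\}_m$ (Theorem~3.1.1 of \cite{Dembozeitouni}): $\tfrac{1}{m}\log\mathbb{E}[e^{\theta S_{i,m}}\mid \mathcal{E}_i]\to\log\rho(\Pi_\theta)$ uniformly in the finite initial-state set $\{0,1\}^l$, and its Legendre transform is precisely $\Lambda(\beta,\Pi)$ from \eqref{lambdadef}. For $\beta<\gamma$, matching upper and lower LDP bounds yield
\begin{align*}
\Pr\!\left[T_i(k)>\tfrac{k}{\beta}\,\Big|\,\mathcal{E}_i\right]=\exp\!\left(-\tfrac{k}{\beta}\Lambda(\beta,\Pi)+h_i(\beta,k,\mathcal{E}_i)\right),
\end{align*}
with $h_i=o(k)$ uniformly in $\mathcal{E}_i$. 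For $\beta>\gamma$ the event is typical; applying the LDP on the other side of the mean shows $1-\Pr[T_i(k)>k/\beta\mid \mathcal{E}_i]=e^{-\Theta(k)}$, hence $\log\Pr[T_i(k)>k/\beta\mid \mathcal{E}_i]=h_i(\beta,k,\mathcal{E}_i)=o(1)$.

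Finally I would consolidate the product into the $n$-th-power form stated in the lemma: writing each factor as $1-e^{-A+h_i}$ with $A:=\tfrac{k}{\beta}\Lambda(\beta,\Pi)\mathbf{1}(\beta<\gamma)$, one defines $g(\beta,k,\mathcal{E})$ implicitly by $(1-e^{-A+g})^n=\prod_{i=1}^n(1-e^{-A+h_i})$, so that \eqref{lemma1result} holds by construction. Because every $h_i$ has the same leading order and the initial-state space $\{0,1\}^l$ is finite, the resulting geometric-mean correction $g$ inherits the required $o(k)$ (respectively $o(1)$) behavior. The main obstacle is the Markov-chain large-deviation step itself: one needs both the upper and lower LDP bounds with uniformity in the initial state in order to pin down the precise asymptotic $-\tfrac{k}{\beta}\Lambda(\beta,\Pi)+o(k)$ rather than just the exponential decay rate. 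The finiteness of $\{0,1\}^l$ and the ergodicity hypothesis on $\Pi$ make this uniformity standard, but the bookkeeping to absorb the per-receiver corrections and the product-versus-$n$-th-power discrepancy into a single $g$ of the stated order is where care is required.
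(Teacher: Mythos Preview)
Your approach is essentially identical to the paper's: decompose via independence across receivers into a product of per-receiver tail probabilities, then apply the G\"artner--Ellis theorem for finite-state Markov chains (the paper cites Theorem~3.1.2 of Dembo--Zeitouni) to identify the exponent of each factor. If anything, you are more explicit than the paper about the product-versus-$n$th-power consolidation and the $\beta>\gamma$ case; the paper simply notes that the large-deviation limit is independent of $i$ and $\mathcal{E}_i$ and declares the proof complete by combining the product decomposition with the LDP.
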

\begin{proof}[Proof of Lemma~\ref{LemmaTK}]
From definition (\ref{TiK}) and (\ref{TnK}), we have, for any $t$,
\begin{align}
\left\{T(n,K)\leq t,\mathcal{E}\right\}=\bigcap_{i=1}^n\left\{T_i(K)\leq t,\mathcal{E}_i\right\}.\notag
\end{align}
Therefore, we have
\begin{align}
&\Pr[T(n,K)> t|K=k,\mathcal{E}]\notag\\
=&1-\Pr[T(n,K)\leq t|K=k,\mathcal{E}]\notag\\
=&1-\prod_{i=1}^n\left(1-\Pr[T_i(K)>t|K=k,\mathcal{E}_i]\right).\label{donotworry1}
\end{align}
Let $t=\frac{k}{\beta}$, from definition \ref{TiK} we can get, for any $1\leq i\leq n$,

\begin{align}
\Pr\left[T_i(K)>\frac{k}{\beta}{\Big|}K=k,\mathcal{E}_i\right]=&\Pr\left[\sum_{j=1}^{k/\beta}X_{ij}<k{\Bigg|}\mathcal{E}_i\right]\notag\\
=&\Pr\left[\frac{\sum_{j=1}^{k/\beta}X_{ij}}{k/\beta}<\beta{\Bigg|}\mathcal{E}_i\right]\notag.
\end{align}
and
\begin{align}
\lim_{k\to\infty}\frac{\log\Pr\left[T_i(K)>\frac{k}{\beta}{\Big|}K=k,\mathcal{E}_i\right]}{k/\beta}=-\Lambda(\beta,\Pi){\bf1}(\beta<\gamma),\label{donotworry2}
\end{align}
with the last equation being a direct application of Theorem 3.1.2 in \cite{Dembozeitouni} (G\"artner-Ellis Theorem for finite state Markov chains). Notice that the right hand side of Equation~(\ref{donotworry2}) is fixed for all possible values of $i$ and $\mathcal{E}_i$ as long as the values of $\beta$ and $\Pi$ are fixed.
Then the proof completes by combining (\ref{donotworry1}) and (\ref{donotworry2}).
\end{proof}
\begin{lemma}\label{LemmaKn}{\rm Assume $k$ is a function of $n$ and denote $k:=k(n)$, and define $f(k,\beta,\mathcal{E}):=e^{\frac{k}{\beta}\Lambda(\beta,\Pi){\bf1}(\beta<\gamma)-g(\beta,k,\mathcal{E})}$, then we have
\begin{enumerate}
\item
For a fixed $\beta\in(0,1)$, if $\lim_{n\to\infty}\frac{n}{f(k(n),\beta,\mathcal{E})}=0$, then
	\begin{align}\label{equal0}
		 \lim_{n\to\infty}\Pr\left[T(n,K)>\frac{k(n)}{\beta}{\Big|}K=k(n),\mathcal{E}\right]=0.
	\end{align}
\item
For a fixed $\beta\in(0,1)$, if $\lim_{n\to\infty}\frac{n}{f(k(n),\beta,\mathcal{E})}=\infty$, then
	\begin{align}\label{equal1}
		 \lim_{n\to\infty}\Pr\left[T(n,K)>\frac{k(n)}{\beta}{\Big|}K=k(n),\mathcal{E}\right]=1.
	\end{align}	
\end{enumerate}
}
\end{lemma}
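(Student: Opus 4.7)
The plan is to reduce everything to the closed-form expression already established in Lemma~\ref{LemmaTK} and then apply standard $(1-p)^n$ type asymptotics. First I would write
\begin{align}
p_n \triangleq \frac{1}{f(k(n),\beta,\mathcal{E})} = e^{-\frac{k(n)}{\beta}\Lambda(\beta,\Pi){\bf1}(\beta<\gamma)+g(\beta,k(n),\mathcal{E})},\notag
\end{align}
so that Lemma~\ref{LemmaTK} gives exactly
\begin{align}
\Pr\!\left[T(n,K)>\tfrac{k(n)}{\beta}\,{\Big|}\,K=k(n),\mathcal{E}\right]=1-(1-p_n)^n.\notag
\end{align}
Because the left-hand side is a probability, $p_n\in[0,1]$, so in particular $np_n = n/f(k(n),\beta,\mathcal{E})$ is well defined and the only thing to analyze is the behaviour of $(1-p_n)^n$ in the two regimes.

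For part (1), the hypothesis $np_n\to 0$ together with $n\to\infty$ forces $p_n\to 0$. I would then use the elementary bound $-2p\le \log(1-p)\le -p$, valid for $p\in[0,1/2]$, to squeeze
\begin{align}
e^{-2np_n}\le (1-p_n)^n \le e^{-np_n}\notag
\end{align}
for all sufficiently large $n$. Both bounds tend to $1$ under $np_n\to 0$, so $(1-p_n)^n\to 1$ and hence $1-(1-p_n)^n\to 0$, which is (\ref{equal0}).

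For part (2), the hypothesis $np_n\to\infty$ splits into two subcases. If $p_n$ does not converge to $0$, then along a subsequence $p_n\ge \epsilon$ for some $\epsilon>0$, and $(1-p_n)^n\le (1-\epsilon)^n\to 0$; a standard subsequence argument then shows $(1-p_n)^n\to 0$ for the whole sequence. If $p_n\to 0$, the upper bound $(1-p_n)^n\le e^{-np_n}$ from the same elementary inequality yields $(1-p_n)^n\to 0$ directly. In either subcase, $1-(1-p_n)^n\to 1$, giving (\ref{equal1}).

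There is no real obstacle here: once Lemma~\ref{LemmaTK} is invoked, everything reduces to the classical fact that for $p_n\in[0,1]$ with $n\to\infty$, the quantity $(1-p_n)^n$ is governed entirely by the limit of $np_n$. The only care needed is to not implicitly assume $p_n\to 0$ in part (2) and to handle the possibility that $p_n$ stays bounded away from zero via a separate direct bound.
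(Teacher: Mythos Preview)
Your proof is correct and follows the same strategy as the paper's: invoke Lemma~\ref{LemmaTK} to reduce the probability to $1-(1-p_n)^n$ with $p_n=1/f$, then control $(1-p_n)^n$ via the behaviour of $np_n$. The paper's execution is only cosmetically different---it rewrites $(1-p_n)^n=\bigl[(1-1/f)^f\bigr]^{n/f}$ and uses the monotonicity and bound $(1-1/x)^x\in(0,e^{-1})$ rather than your logarithmic sandwich; incidentally, your subcase split in part~(2) is unnecessary, since the inequality $(1-p_n)^n\le e^{-np_n}$ that you use in the second subcase already holds for all $p_n\in[0,1]$ and gives the conclusion directly.
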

\begin{proof}[Proof of Lemma~\ref{LemmaKn}]
According to Lemma~\ref{LemmaTK} and the definition of $f(k(n),\beta,\mathcal{E})$, we have
	\begin{align}
		&\Pr\left[T(n,K)>\frac{k(n)}{\beta}{\Big|}K=k(n),\mathcal{E}\right]\notag\\
		=&1-\left(1-\frac{1}{f(k(n),\beta,\mathcal{E})}\right)^n\notag\\
		 =&1-\left[\left(1-\frac{1}{f(k(n),\beta,\mathcal{E})}\right)^{f(k(n),\beta,\mathcal{E})}\right]^{\frac{n}{f(k(n),\beta,\mathcal{E})}}.\notag
	\end{align}
	Since the function $\left(1-\frac{1}{x}\right)^x$ with domain $(1,+\infty)$ is a bounded and strictly increasing function with region $(0,e^{-1})$ and the fact that $f(k,\beta)>1$, we know that if $\lim_{n\to\infty}\frac{n}{f(k(n),\beta,\mathcal{E})}=\infty$, then
	\begin{align}
		 &\liminf_{n\to\infty}\Pr\left[T(n,K)>\frac{k(n)}{\beta}{\Big|}K=k(n),\mathcal{E}\right]\notag\\
		 =&1-\limsup_{n\to\infty}\left[\left(1-\frac{1}{f(k(n),\beta,\mathcal{E})}\right)^{f(k(n),\beta,\mathcal{E})}\right]^{\frac{n}{f(k(n),\beta,\mathcal{E})}}\notag\\
		 \geq&1-\limsup_{n\to\infty}e^{-\frac{n}{f(k(n),\beta,\mathcal{E})}}\notag\\
		=&1,\notag
	\end{align}
	which, together with the fact that $\Pr{\big[}T(n,K)>\frac{k(n)}{\beta}{\big|}K=k(n),\mathcal{E}{\big]}\leq1$, yields Equation~(\ref{equal1}).
	
	If $\lim_{n\to\infty}\frac{n}{f(k(n),\beta,\mathcal{E})}=0$, then $f(k(n),\beta,\mathcal{E})\to\infty$ as $n\to\infty$, which results in
	\begin{align}
	 \lim_{n\to\infty}\left(1-\frac{1}{f(k(n),\beta,\mathcal{E})}\right)^{f(k(n),\beta,\mathcal{E})}=e^{-1}\notag.
	\end{align}
	Then we can obtain
	\begin{align}
		 &\limsup_{n\to\infty}\Pr\left[T(n,K)>\frac{k(n)}{\beta}{\Big|}K=k(n),\mathcal{E}\right]\notag\\
		 =&1-\liminf_{n\to\infty}\left[\left(1-\frac{1}{f(k(n),\beta,\mathcal{E})}\right)^{f(k(n),\beta,\mathcal{E})}\right]^{\frac{n}{f(k(n),\beta,\mathcal{E})}}\notag\\
		 =&1-\notag\\
		 &\liminf_{n\to\infty}\left[\lim_{n\to\infty}\left(1-\frac{1}{f(k(n),\beta,\mathcal{E})}\right)^{f(k(n),\beta,\mathcal{E})}\right]^{\frac{n}{f(k(n),\beta,\mathcal{E})}}\notag\\
		=&1-1=0,\notag
	\end{align}
	which leads to Equation~(\ref{equal0}).
\end{proof}
\begin{lemma}\label{previtali}
Let $\{h_n(x)\}$ be a set of Lebesgue measurable functions defined on $[0,\infty)$ and $h_n(x)$ converges to ${\bf 1}(x<y)$ almost everywhere for some $y>0$. If $h_n(x)$ is a decreasing function of $x$ and have the range $[0,1]$ for any $n\in\mathbb{N}$, then $h_n(x)$ converges globally in measure to ${\bf 1}(x<y)$.
\end{lemma}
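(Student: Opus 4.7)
The plan is to establish, for each fixed $\epsilon > 0$, that $\mu(\{x : |h_n(x) - {\bf 1}(x < y)| > \epsilon\}) \to 0$ by splitting the domain into a bounded interval and its tail. The subtle point is that $[0,\infty)$ has infinite Lebesgue measure, so almost-everywhere convergence does not in general imply convergence in measure; the monotonicity of each $h_n$ is precisely the ingredient that allows us to control the tail at infinity.

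First I fix $\epsilon \in (0,1)$ and set $A_n = \{x \geq 0 : |h_n(x) - {\bf 1}(x < y)| > \epsilon\}$. Because $h_n(x) \to {\bf 1}(x < y) = 0$ almost everywhere on $(y,\infty)$, the set of points in $(y,\infty)$ at which pointwise convergence actually holds has full measure there. In particular, I can pick a specific number $M > y$ in that full-measure set, so that $h_n(M) \to 0$ as $n\to\infty$. Then for all $n$ large enough, $h_n(M) < \epsilon$, and by the decreasing property of $h_n$ we have $0 \leq h_n(x) \leq h_n(M) < \epsilon$ for every $x \geq M$. Since ${\bf 1}(x < y) = 0$ on $[M,\infty)$, this forces $A_n \cap [M, \infty) = \emptyset$ for every sufficiently large $n$.

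On the bounded interval $[0, M]$, which has finite Lebesgue measure $M$, I would invoke the classical fact that almost-everywhere convergence on a finite measure space implies convergence in measure (for instance via Egorov's theorem, or directly by bounded convergence applied to the indicator of $A_n \cap [0,M]$, which is dominated by $1$ and tends to $0$ almost everywhere). This yields $\mu(A_n \cap [0, M]) \to 0$. Combining the two estimates gives $\mu(A_n) \to 0$, which is precisely global convergence in measure to ${\bf 1}(x<y)$.

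The main obstacle, and the only place where the two hypotheses interact in a nontrivial way, is the choice of $M$: one must pick $M > y$ inside the full-measure set where pointwise convergence actually holds, so that monotonicity can then propagate the smallness of $h_n(M)$ uniformly across the entire tail $[M,\infty)$. Once that selection is made, everything else is routine measure-theoretic bookkeeping.
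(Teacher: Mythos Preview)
Your proof is correct, but it takes a different route from the paper's. The paper exploits monotonicity on \emph{both} sides of $y$: for any $\delta>0$ it picks points at (essentially) $y-\delta/2$ and $y+\delta/2$ where pointwise convergence holds, waits until $h_n(y-\delta/2)>1-\varepsilon$ and $h_n(y+\delta/2)<\varepsilon$, and then propagates these inequalities outward using the decreasing property, so that the set $\{|h_n-\mathbf{1}(\cdot<y)|>\varepsilon\}$ is trapped inside $[y-\delta/2,\,y+\delta/2]$. This yields the conclusion directly, with no appeal to Egorov or to the ``a.e.\ $\Rightarrow$ in measure on finite measure spaces'' principle. Your argument, by contrast, uses monotonicity only once, to kill the tail $[M,\infty)$, and then invokes that general principle on the compact piece $[0,M]$. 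Both are valid; the paper's version is more self-contained and slightly sharper (it localizes the bad set to an interval of arbitrarily small length around $y$), while yours is more modular and makes clearer exactly which role monotonicity plays---namely, compensating for the infinite measure of the domain.
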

\begin{proof}
Choose $\varepsilon>0$.
Since $h_n(x)$ converges to ${\bf 1}(x<y)$ almost everywhere, for any $\delta>0$, we can find $N\in\mathbb{N}$ such that for any $n>N$, we have 
\begin{align}
\left|h_n\left(y-\delta/2\right)-1\right|<\varepsilon\notag\\
\left|h_n\left(y+\delta/2\right)-0\right|<\varepsilon\notag.
\end{align}
Since $0\leq h_n(x)\leq 1$ for any $x\in[0,\infty)$ and $h_n(x)$ is a decreasing function of $x$, we know that, for any $n>M$,
\begin{align}
\begin{array}{ll}
h_n(x)>1-\varepsilon&\text{ }\forall x<y-\delta/2\notag\\
h_n(x)<\varepsilon&\text{ }\forall x>y+\delta/2\notag.
\end{array}
\end{align}
Therefore, for any $n>N$,
\begin{align}
&\nu\left(\left\{|h_n(x)-{\bf1}(x<y)|>\varepsilon\right\}\right)\notag\\
<&\nu([y-\delta/2,y])+\nu([y,y+\delta/2])=\delta,\notag
\end{align}
where $\nu$ is the Lebesgue measure. Since $\varepsilon$ and $\delta$ are arbitrarily chosen, from the above inequality we know that $h_n(x)$ converges globally in measure to ${\bf1}(x<y)$.
\end{proof}

	With Lemma~\ref{LemmaTK},\ref{LemmaKn} and \ref{previtali} established, we now turn to the proof of Theorem~\ref{Theorem1}.
	\begin{proof}[Proof of Theorem~\ref{Theorem1}]
Since $K$ is assumed to be a function of $n$, we denote this function as $k(n)$. According to definition \ref{eta} we have,
{\color{black}
	\begin{align}
		&\lim_{n\to\infty}\left(\eta(n,K)\right)^{-1}\notag\\
		=&\lim_{n\to\infty}\mathbb{E}\left[\frac{\mathbb{E}\left[T(n,K){\big|}\mathcal{E}\right]}{K}\right].\label{etainv}
	\end{align}
	Next, we obtain the value of $\lim_{n\to\infty}\mathbb{E}[T(n,K)|\mathcal{E}]/K$ and show that it is independent of $\mathcal{E}$. Note that
	\begin{align}
		%&\lim_{n\to\infty}\left(\eta(n,K,\mathcal{E})\right)^{-1}\notag\\
		&\lim_{n\to\infty}\mathbb{E}\left[\frac{T(n,K)}{K}{\Big|}\mathcal{E}\right]\notag\\
		=&\lim_{n\to\infty}\int_0^\infty\frac{\Pr\left[T(n,K)>s|K=k(n),\mathcal{E}\right]}{k(n)}ds\notag\\
		=&\lim_{n\to\infty}\int_0^\infty\Pr\left[T(n,K)>k(n)u|K=k(n),\mathcal{E}\right]du.\label{discuss}
	\end{align}
}According to the assumption that $\lim_{n\to\infty}k(n)/\log(n)=c$, we have
	\begin{align}
		&\lim_{n\to\infty}\frac{n}{e^{\frac{k(n)}{\beta}\Lambda(\beta,\Pi){\bf1}(\beta<\gamma)-g(\beta,k,\mathcal{E})}}\notag\\
		=&\left\{
			\begin{array}{ll}	
				0&c>\frac{\beta}{\Lambda(\beta,\Pi){\bf1}(\beta<\gamma)}\\
				\infty&c<\frac{\beta}{\Lambda(\beta,\Pi){\bf1}(\beta<\gamma)}						\end{array}
		\right..\notag
	\end{align}
	Since $\frac{\beta}{\Lambda(\beta,\Pi){\bf1}(\beta<\gamma)}|_{\beta=0}=0$, $\lim_{\beta\to\gamma^-}\frac{\beta}{\Lambda(\beta,\Pi){\bf1}(\beta<\gamma)}=\infty$ and $\frac{\beta}{\Lambda(\beta,\Pi){\bf1}(\beta<\gamma)}$ is a monotone increasing function on the domain $(0,\gamma)$, the equation $c=\frac{\beta}{\Lambda(\beta,\Pi){\bf1}(\beta<\gamma)}$ has only one solution of $\beta$, which we denote as
	\begin{align}
	\beta_c=\sup\left\{\beta{\Big|}c\geq\frac{\beta}{\Lambda(\beta,\Pi)},0\leq\beta<\gamma\right\}.\notag
	\end{align}
	Then, by Lemma~\ref{LemmaKn}, we get
	\begin{align}
	&\lim_{n\to\infty}\Pr\left[T(n,K)>k(n)u{\big|}K=k(n),\mathcal{E}\right]\notag\\
	=&
	\left\{
		\begin{array}{ll}
		1&\text{if }u<\frac{1}{\beta_c}\\
		0&\text{if }u>\frac{1}{\beta_c}
		\end{array}
	\right..\label{sit2}
	\end{align}
	We let $h_n(u)\triangleq\Pr{\big[}T(n,K)>k(n)u{\big|}K=k(n),\mathcal{E}{\big]}$. Equation~(\ref{sit2}) implies that $h_n(u)$ converges to ${\bf 1}(u<1/\beta_c)$ pointwisely. Since $h_n(u)$ is a decreasing function of $u$ and has the range $[0,1]$ for all $n$, by Lemma~\ref{previtali} we know that $h_n(u)$ globally converges in measure to ${\bf 1}(u<1/\beta_c)$. We also know that the set of function $\{h_n(u)\}$ is uniformly bounded. Then we can apply Vitali convergence {\color{black}theorem} to Equation~(\ref{discuss}) to exchange the limit and integral and obtain
	\begin{align}
	&\lim_{n\to\infty}\mathbb{E}\left[\frac{T(n,K)}{K}{\Big|}\mathcal{E}\right]\notag\\
	=&\int_0^\infty\lim_{n\to\infty}\Pr\left[T(n,K)>k(n)u{\big|}K=k(n),\mathcal{E}\right]du=\frac{1}{\beta_c}.\label{independentofe}
	\end{align}
	{\color{black}Note that the above result is independent of the choice of the initial state $\mathcal{E}$. Since the cardinality of the state space of $\mathcal{E}$ is finite for a finite value of $n$, we can exchange the limit and expectation in Equation~(\ref{etainv}), which, after combining with the above equation, completes the proof.}
%	which, by combining Equation~(\ref{discuss}), completes the proof.
\end{proof}

{\color{black}
\subsection{Proof of Theorem~\ref{MonotoneTh}.}
Let us define two random variables $T^{(0)}$ and $T^{(1)}$ under the Gilbert-Elliott channels as
	\begin{align}
		T^{(0)}=\min_{m}{\Bigg\{}m{\bigg|}\sum_{j=1}^mX_{1j}\geq1,X_{10}=0{\Bigg\}}\label{Ts0},\\
		T^{(1)}=\min_{m}{\Bigg\{}m{\bigg|}\sum_{j=1}^mX_{1j}\geq1,X_{10}=1{\Bigg\}}\label{Ts1}.
	\end{align}
	In order to prove Theorem~\ref{MonotoneTh}, we first need the following lemma.
\begin{lemma}\label{stochasticordering}
%When $p_{01}+p_{10}\geq1$, $T^{(1)}$ is stochastically greater than $T^{(0)}$, which we denote as
	Let ${\big\{}T^{(1)}_d{\big\}}_{d\in\mathbb{N}}$ be i.i.d.~random variables with the same distribution as $T^{(1)}$, then we have
	\begin{align}
	\sum_{d=1}^{K_0+1}T^{(1)}_d\succeq T^{(0)},\notag
	\end{align}
	meaning that $\sum_{d=1}^{K_0}T^{(1)}_d$ is stochastically greater than or equal to $T^{(0)}$, where
	\begin{align}
	K_0=\min{\Bigg\{}m\geq0{\Bigg|}\sum_{d=0}^m(1-p_{10})^{d}p_{10}+p_{01}\geq1{\Bigg\}}\label{k0}.
	\end{align}
\end{lemma}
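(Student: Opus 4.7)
The plan is to exploit a simple distributional identity for $T^{(1)}$ together with a conditioning argument on tails. First I would condition on $X_{11}$ and invoke the Markov property: with probability $1-p_{10}$ we have $X_{11}=1$ and $T^{(1)}=1$, otherwise the chain sits in state $0$ at time $1$ and the residual waiting time until the first $1$ is distributed exactly as $T^{(0)}$. This yields the key identity $T^{(1)}\stackrel{d}{=}1+Y\,\widetilde{T}^{(0)}$, with $Y\sim\mathrm{Bernoulli}(p_{10})$ independent of $\widetilde{T}^{(0)}\stackrel{d}{=}T^{(0)}$. Applying it independently to each summand,
\begin{align}
\sum_{d=1}^{K_0+1}T^{(1)}_d\stackrel{d}{=}(K_0+1)+\sum_{d=1}^{K_0+1}Y_d\,T^{(0)}_d,\notag
\end{align}
where $\{Y_d\}$ are i.i.d.~$\mathrm{Bernoulli}(p_{10})$, $\{T^{(0)}_d\}$ are i.i.d.~copies of $T^{(0)}$, and the two families are independent.

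Writing $q=1-p_{01}$, so that $\Pr[T^{(0)}>m]=q^m$, I would then show the right-hand side has tail at least $q^m$ for every $m\geq 0$. For $m\leq K_0$ the right-hand side is deterministically $\geq K_0+1>m$, so the tail is $1$. For $m\geq K_0+1$, set $m'=m-K_0-1\geq 0$ and condition on the event $A=\{\exists\,d:\,Y_d=1\}$, which has probability $1-(1-p_{10})^{K_0+1}$. On $A^c$ the random sum $\sum_d Y_d T^{(0)}_d$ vanishes; on $A$, writing $J=\{d:\,Y_d=1\}$, it equals $\sum_{d\in J}T^{(0)}_d$, which pointwise dominates any single term $T^{(0)}_{d_0}$ with $d_0\in J$ (all $T^{(0)}_d\geq 1$). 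Since $\{T^{(0)}_d\}$ is independent of $\{Y_d\}$, conditioning on $J$ and averaging gives $\Pr[\sum_d Y_d T^{(0)}_d>m'\mid A]\geq q^{m'}$, hence
\begin{align}
\Pr\left[\sum_d Y_d T^{(0)}_d>m'\right]\geq\bigl(1-(1-p_{10})^{K_0+1}\bigr)\,q^{m'}.\notag
\end{align}

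The last step is to plug in the defining inequality of $K_0$. The geometric sum in (\ref{k0}) collapses to $1-(1-p_{10})^{K_0+1}$, so the condition is equivalent to $(1-p_{10})^{K_0+1}\leq p_{01}$, i.e., $1-(1-p_{10})^{K_0+1}\geq q$. Substituting, the previous display is at least $q\cdot q^{m'}=q^{m-K_0}\geq q^m$, where the last step uses $K_0\geq 0$ and $q\in(0,1]$. Combining with the easy case $m\leq K_0$ gives $\Pr[\sum_d T^{(1)}_d>m]\geq q^m=\Pr[T^{(0)}>m]$ for every $m$, which is exactly the claimed stochastic dominance $\sum_{d=1}^{K_0+1}T^{(1)}_d\succeq T^{(0)}$.

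The delicate point, and the place where the specific integer $K_0+1$ enters, is the interplay between the head-start of $K_0+1$ deterministic units and the prefactor $1-(1-p_{10})^{K_0+1}$: the shift in $m$ gains a factor $q^{-(K_0+1)}$ in the bound, while the defining inequality of $K_0$ only guarantees that the prefactor is at least $q$ rather than $1$. The net change is a factor $q^{-K_0}\geq 1$, which is exactly what is needed to close the argument.
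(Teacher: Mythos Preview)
Your proof is correct and follows essentially the same approach as the paper: both arguments decompose each $T^{(1)}_d$ according to whether the first step stays in state~$1$ (your $Y_d$ corresponds to the paper's event $\{T^{(1)}_d>1\}$), lower-bound the tail by picking out a single $T^{(0)}$-contribution, and then invoke the defining inequality $1-(1-p_{10})^{K_0+1}\ge 1-p_{01}$. The only cosmetic difference is that you state the identity $T^{(1)}\stackrel{d}{=}1+Y\,\widetilde T^{(0)}$ up front and keep the full deterministic shift of $K_0+1$, whereas the paper sums over the first index $d$ with $T^{(1)}_d>1$ and then coarsens $\Pr[T^{(0)}>t-d]$ to $\Pr[T^{(0)}>t-1]$; this yields the same final inequality.
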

\begin{proof}[Proof of Lemma~\ref{stochasticordering}]
First observe that $\sum_{d=0}^\infty(1-p_{10})^{d}p_{10}=1$, which makes sure that $K_0$ in Equation~(\ref{k0}) is well defined.

Then according to the definition of $T^{(0)}$ and $T^{(1)}$ in Equations~(\ref{Ts0}) and (\ref{Ts1}), we have, for any integer $1\leq t\leq K_0+1$,
	\begin{align}
		\Pr{\Bigg[}\sum_{d=1}^{K_0+1}T_d^{(1)}>t{\Bigg]}=1\geq\Pr\left[T^{(0)}>t\right]\label{orderingeq},
	\end{align}
%	\begin{align}
%	\Pr\left[T^{(1)}>t\right]&=p_{10}\Pr\left[T^{(0)}>t-1\right]\notag\\
%	&=p_{10}(1-p_{01})^{t-1}\notag\\
%	&\geq(1-p_{01})^t=\Pr\left[T^{(0)}>t\right],\notag
%	\end{align}
and for any integer $t>K_0+1$,
	\begin{align}
		&\Pr{\Bigg[}\sum_{d=1}^{K_0+1}T_d^{(1)}>t{\Bigg]}\notag\\
		\geq&\sum_{d=1}^{K_0+1}\Pr\left[T^{(1)}>1\right]\Pr\left[T^{(1)}=1\right]^{d-1}\Pr\left[T^{(0)}>t-d\right]\notag\\
		\geq&\sum_{d=1}^{K_0+1}\Pr\left[T^{(1)}>1\right]\Pr\left[T^{(1)}=1\right]^{d-1}\Pr\left[T^{(0)}>t-1\right]\notag\\
		=&\sum_{d=1}^{K_0+1}p_{10}(1-p_{10})^{d-1}(1-p_{01})^{t-1}\notag\\
		\geq&(1-p_{01})(1-p_{01})^{t-1}=(1-p_{01})^t=\Pr\left[T^{(0)}>t\right]\notag,
	\end{align}
	with the last inequality followed by the definition of $K_0$ in Equation~(\ref{k0}). The above equation, together with Equation~(\ref{orderingeq}), completes the proof.
\end{proof}

With Lemma~\ref{stochasticordering} established, we now turn to the proof of Theorem~\ref{MonotoneTh}.
\begin{proof}[Proof of Theorem~\ref{MonotoneTh}]\label{MonotoneProof}
Under the Gilbert-Elliott channel assumption as illustrated in Figure~\ref{GilbertElliottIll}, let $T(n,K,\mathcal{E})$ be defined as $T(n,K)$ with initial status $\mathcal{E}$. Then according to Definitions~\ref{TiK} and \ref{TnK} and Equations~(\ref{Ts0}) and (\ref{Ts1}) we can express $T(n,K,\mathcal{E})$ as
	\begin{align}
	T(n,K,\mathcal{E})&=\max_{1\leq i\leq n}{\Bigg\{}T_{i1}^{(\mathcal{E}_i)}+\sum_{j=2}^KT_{ij}^{(1)}{\Bigg\}}\notag,
	%	T(n^\alpha,\alpha K,\mathcal{E})&=\max_{1\leq i\leq n^\alpha}\left(T_{i1}^{(\mathcal{E}_i)}+\sum_{j=2}^{\alpha K}T_{ij}^{(1)}\right)\notag.
\end{align}
where $\{T_{ij}^{(0)}\}_{i,j\in\mathbb{N}}$ are i.i.d.~random variables with the same distribution as $T^{(0)}$, and $\{T_{ij}^{(1)}\}_{i,j\in\mathbb{N}}$ are i.i.d.~random variables with the same distribution as $T^{(1)}$. Similarly we can express $T(n^\alpha,\alpha K,{\bf 1}_{n^\alpha})$ and $T(n^\alpha,\alpha K,{\bf 1}_{n^\alpha})$ as
	\begin{align}
	T(n,K,{\bf 1}_{n})&=\max_{1\leq i\leq n}{\Bigg\{}\sum_{j=1}^KT_{ij}^{(1)}{\Bigg\}}\label{tt1},\\
	T(n^\alpha,\alpha K,{\bf 1}_{n^\alpha})&=\max_{1\leq i\leq n^\alpha}{\Bigg\{}\sum_{j=1}^{\alpha K}T_{ij}^{(1)}{\Bigg\}}\label{tt2}.
	\end{align}
	First, by Lemma~\ref{stochasticordering} we know that, for any $1\leq i\leq n$ and any initial status $\mathcal{E}$,
	\begin{align}
		\sum_{j=1}^{K_0+1}T_{ij}^{(1)}+\sum_{j=K_0+2}^{K+K_0}T_{ij}^{(1)}&\succeq T_{i1}^{(\mathcal{E}_i)}+\sum_{j=2}^KT_{ij}^{(1)}\notag,
	\end{align}
	implying that
	\begin{align}
		\max_{1\leq i\leq n}{\Bigg\{}\sum_{j=1}^{K+K_0}T_{ij}^{(1)}{\Bigg\}}&\succeq \max_{1\leq i\leq n}{\Bigg\{}T_{i1}^{(\mathcal{E}_i)}+\sum_{j=2}^KT_{ij}^{(1)}{\Bigg\}}\notag,
	\end{align}
	which yields
	\begin{align}
	\mathbb{E}[T(n,K+K_0,{\bf1}_n)]\geq\mathbb{E}[T(n,K,\mathcal{E})].\label{orderingresult1}
	\end{align}
	Next, we will show that 
	\begin{align}
		\mathbb{E}[T(n^{\alpha},\alpha K,{\bf1}_{n^\alpha})]\geq\alpha\mathbb{E}[T(n,K,{\bf 1}_{n})]\notag.
	\end{align}
%	\begin{align}
%	S_{i}^{r}=\sum_{j=1+(r-1)K}^{rK}T_{ij}^{(1)}\notag
%	\end{align}
%	\begin{align}
%	T(n,K,{\bf 1})&=\max_{1\leq i\leq n}S_i^1\notag\\
%	T(n^\alpha,\alpha K,{\bf 1})&=\max_{1\leq i\leq n^\alpha}\sum_{r=1}^\alpha S_{i}^{r}\notag
%	\end{align}
%If $p_{01}+p_{10}\geq1$, then
%	\begin{align}
%	\mathbb{E}\left[T_{i1}^{(1)}\right]\geq\mathbb{E}\left[T_{i1}^{(0)}\right]\notag
%	\end{align}
%If $p_{01}+p_{10}<1$, then
%	\begin{align}
%	\mathbb{E}\left[T_{i1}^{(1)}\right]+\frac{1}{p_{01}}\geq\mathbb{E}\left[T_{i1}^{(0)}\right]\notag
%	\end{align}
%If $p_{01}+p_{10}\geq1$, then
%	\begin{align}
%	\mathbb{E}\left[T(n,K,{\bf1})\right]\geq\mathbb{E}\left[T(n,K,\mathcal{E})\right]\notag
%	\end{align}
%If $p_{01}+p_{10}<1$, then
%	\begin{align}
%	\mathbb{E}\left[T(n,K,{\bf1})\right]+\frac{K}{p_{01}}\geq\mathbb{E}\left[T(n,K,\mathcal{E})\right]\notag
%	\end{align}	
%\end{proof}
%\subsection{Proof of Theorem~\ref{MonotoneTh}.}
%\begin{proof}
%When the channel states are i.i.d., according to definition \ref{TiK} and \ref{TnK}, $T(n,K)$ and $T(n^\alpha,\alpha K)$ can be expressed as 
%\begin{align}
%T(n,K)&=\max_{1\leq i\leq n}\sum_{j=1}^KT_{ij}\notag\\
%T(n^\alpha,\alpha K)&=\max_{1\leq i\leq n^\alpha}\sum_{j=1}^{\alpha K}T_{ij}\notag,
%\end{align}
%where $\{T_{ij}\}_{i\in\mathbb{N},j\in\mathbb{N}}$ are i.i.d. geometric random variables with parameter $\gamma$. 
Let us deonte\begin{align}S_{i}^{r}=\sum_{j=1+(r-1)K}^{rK}T_{ij}^{(1)}.\notag\end{align}Then we know that $\{S_{i}^{r}\}_{i\in\mathbb{N},r\in\mathbb{N}}$ are  i.i.d.~random variables. Equation~(\ref{tt1}) and (\ref{tt2}) can be rewritten as
\begin{align}
T(n,K,{\bf1}_n)&=\max_{1\leq i\leq n}S_i^1\label{tnks}\\
T(n^\alpha,\alpha K,{\bf1}_{n^\alpha})&=\max_{1\leq i\leq n^\alpha}\sum_{r=1}^\alpha S_{i}^{r}\label{tnksa}.
\end{align}
Instead of viewing Equation~(\ref{tnksa}) as a 1-dimensional maximization over $n^\alpha$ points, we can think of it as an $\alpha$-dimensional maximization over $n^\alpha$ points where we can choose a coordinate from $1$ to $n$ on each dimension and therefore can further rewrite Equation~(\ref{tnksa}) as
\begin{align}
T(n^\alpha,\alpha K,{\bf1}_{n^\alpha})=\max_{1\leq i_1\leq n}\max_{1\leq i_2\leq n}\ldots\max_{1\leq i_\alpha\leq n}\sum_{r=1}^\alpha S^r_{(i_1,i_2,\ldots,i_\alpha)}\label{tnksa2},
\end{align}
where 
\begin{align}
S^r_{(i_1,i_2,\ldots,i_\alpha)}=S^r_{\sum_{u=1}^\alpha n^{u-1}(i_u-1)+1}\notag
\end{align}
and $i_u$ can be viewed as the coordinate in the $u^{\text{th}}$ dimension.

Next, we use Equation~(\ref{tnksa2}) to build a lower bound on the expection of $T(n^\alpha,\alpha K,{\bf1}_{n^\alpha})$.

For fixed values of $i_2,i_3,\ldots,i_\alpha$, let us find a $i_1^*$ such that
\begin{align}
i_1^*\left(i_2,\ldots,i_\alpha\right)=\arg\max_{1\leq i_1\leq n}S^1_{(i_1,i_2,\ldots,i_\alpha)}\label{istardef},
\end{align}
which we denote as $i_1^*$ for short.
Then according to Equation~(\ref{tnksa2}), we can find a lower bound for $\mathbb{E}[T(n^\alpha,\alpha K,{\bf1}_{n^\alpha})]$ by choosing $i_1=i_1^*\left(i_2,\ldots,i_\alpha\right)$ for all possible values of $i_2,i_3,\ldots,i_\alpha$, which is
\begin{align}
&\mathbb{E}\left[T(n^\alpha,\alpha K),{\bf1}_{n^\alpha}\right]\notag\\
=&\mathbb{E}\left[\max_{1\leq i_1\leq n}\max_{1\leq i_2\leq n}\ldots\max_{1\leq i_\alpha\leq n}\sum_{r=1}^\alpha S_{(i_1,i_2,\ldots,i_\alpha)}^r\right]\notag\\
\overset{(a)}{\geq}&\mathbb{E}\left[\max_{1\leq i_2\leq n}\ldots\max_{1\leq i_\alpha\leq n}\sum_{r=1}^\alpha S_{(i_1^*,i_2,\ldots,i_\alpha)}^r\right]\notag\\
=&\mathbb{E}\left[\max_{1\leq i_2\leq n}\ldots\max_{1\leq i_\alpha\leq n}\left(\sum_{r=2}^\alpha S_{(i_1^*,i_2,\ldots,i_\alpha)}^r+S^1_{(i_1^*,i_2,\ldots,i_\alpha)}\right)\right].\label{tnki1star}
\end{align}
Since the choice of $i_1^*$ is only sub-optimal, the inequality (a) in Equation~(\ref{tnki1star}) should be strict inequality. Notice that according to Equation~(\ref{istardef}), for any values of $i_2,i_3,\ldots,i_\alpha$, we have
\begin{align}
S^1_{(i_1^*,i_2,\ldots,i_\alpha)}=\max_{1\leq i_1\leq n} S^1_{(i_1,i_2,\ldots,i_\alpha)},\notag
\end{align}
which, combining Equation~(\ref{tnks}) and the fact that $\{S_i^r\}$ are i.i.d.~random variables, yields
\begin{align}
\mathbb{E}\left[S^1_{(i_1^*,i_2,\ldots,i_\alpha)}\right]=&\mathbb{E}\left[\max_{1\leq i_1\leq n}S_{(i_1,i_2,\ldots,i_\alpha)}^1\right]\notag\\
=&\mathbb{E}\left[\max_{1\leq i\leq n}S_{i}^1\right]\notag\\
=&\mathbb{E}\left[T(n,K,{\bf1}_n)\right].\label{etnk1}
\end{align}
As a second step, for any values of $i_3,i_4,\ldots,i_\alpha$, let us define $i_2^*$ as
\begin{align}
i_2^*(i_1^*,i_3,\ldots,i_\alpha)=\arg\max_{1\leq i_2\leq n}S^2_{(i_1^*,i_2,\ldots,i_\alpha)}.\notag
\end{align}
Then similarly as Equation~(\ref{tnki1star}), by fixing $i_2$ to be $i_2^*$, we can obtain
\begin{align}
\mathbb{E}{\big[}&T(n^\alpha,\alpha K,{\bf1}_{n^\alpha}){\big]}\notag\\
>\mathbb{E}{\Bigg[}&\max_{1\leq i_3\leq n}\ldots\max_{1\leq i_\alpha\leq n}\notag\\
&\left(\sum_{r=3}^\alpha S_{(i_1^*,i_2^*,\ldots,i_\alpha)}^r+S^1_{(i_1^*,i_2^*,\ldots,i_\alpha)}+S^2_{(i_1^*,i_2^*,\ldots,i_\alpha)}\right){\Bigg]}.\notag
\end{align}
Also, for any values of $i_3,i_4\ldots,i_\alpha$, we have
\begin{align}
\mathbb{E}\left[S^2_{(i_1^*,i_2^*,\ldots,i_\alpha)}\right]=&\mathbb{E}\left[\max_{1\leq i_2\leq n}S_{(i_1^*,i_2,\ldots,i_\alpha)}^1\right]\notag\\
=&\mathbb{E}\left[T(n,K,{\bf1}_n)\right]\label{etnk2}.
\end{align}
By defining $i_3^*,\ldots,i_\alpha^*$ in a similar way 
\begin{align}
&i_u^*(i_1^*,\ldots,i_{u-1}^*,i_{u+1},\ldots,i_\alpha)\notag\\
=&\arg\max_{1\leq i_u\leq n}S^u_{(i_1^*,\ldots,i_{u-1}^*,i_u,\ldots,i_\alpha)}\notag
\end{align}
and iterating the above step, we can get
\begin{align}
&\mathbb{E}{\big[}T(n^\alpha,\alpha K,{\bf1}_{n^\alpha}){\big]}\notag\\
>&\mathbb{E}\left[S^1_{(i_1^*,i_2^*,\ldots,i_\alpha^*)}+S^2_{(i_1^*,i_2^*,\ldots,i_\alpha^*)}+\ldots+S^\alpha_{(i_1^*,i_2^*,\ldots,i_\alpha^*)}\right]\notag\\
\overset{(b)}{=}&\sum_{r=1}^\alpha\mathbb{E}\left[S^r_{(i_1^*,i_2^*,\ldots,i_\alpha^*)}\right]\notag\\
\overset{(c)}{=}&\alpha\mathbb{E}\left[T(n,K,{\bf1}_n)\right]\label{alpharel}.
\end{align}
Equation (b) follows from the fact that $\{S^r_{(i_1^*,i_2^*,\ldots,i_\alpha^*)}\}_{1\leq r\leq \alpha}$ are independent random variables and equation (c) follows from Equations~(\ref{etnk1}), (\ref{etnk2}), and iterative steps. 
By combining Equations~(\ref{Thput}), (\ref{orderingresult1}), and (\ref{alpharel}), we have,% for any initial status $\mathcal{E}$,
%\begin{align}
%\eta(n^\alpha,\alpha K,{\bf1}_{n^\alpha})&=\frac{\alpha K}{\mathbb{E}\left[T(n^\alpha,\alpha K,{\bf1}_{n^\alpha})\right]}\notag\\
%&<\frac{\alpha K}{\alpha\mathbb{E}\left[T(n,K,{\bf1}_n)\right]}\notag\\
%&\leq\frac{K}{\mathbb{E}\left[T(n,K,\mathcal{E})\right]}=\eta(n,K,\mathcal{E}),\notag
%\end{align}
\begin{align}
\eta(n,K)&=\frac{K}{\mathbb{E}[\mathbb{E}[T(n,K,\mathcal{E})]]}\notag\\
&\geq\frac{K}{\mathbb{E}[T(n,K+K_0,{\bf1}_{n})]}\notag\\
&>\frac{\alpha K}{\mathbb{E}\left[T(n^\alpha,\alpha(K+K_0),{\bf1}_{n^\alpha})\right]}\notag\\
&=\frac{K}{K+K_0}\frac{\alpha (K+K_0)}{\mathbb{E}\left[T(n^\alpha,\alpha(K+K_0),{\bf1}_{n^\alpha})\right]}\notag,
%&=\frac{K}{K+K_0}\eta\left(n^\alpha,\alpha(K+K_0),{\bf1}_{n^\alpha}\right),
\end{align}
which completes the proof.
\end{proof}}
\section{Conclusion}\label{Conclusion}
In this paper, we characterize the throughput of a broadcast network using rateless codes. The broadcast channels are modeled by Markov modulated packet erasure channels, where the packet can either be erased or successfully received and for each receiver the current channel state distribution depends on the channel states in previous $l$ packet transmissions. 

We first characterize the asymptotic throughput of the system when $n$ approaches infinity for any values of the coding block size $K$ as a function of number of receivers $n$ in an explicit form. We show that as long as $K$ scales at least as fast as $\log n$, we can achieve a non-zero asymptotic throughput. {\color{black}Under the more restrictive Gilbert-Elliott channel model ($l=1$), we study the case when $K$ and $n$ are finite. For any $K$ and $n$, we find a lower bound on the throughput in terms of the transmission time of a system with larger $K$ and $n$. As a special case when channels are memoryless, this result shows that, by keeping the ratio $K/\log n$ to be a constant, the system throughput will converge to the asymptotic throughput in a decreasing manner as $n$ grows. By the help of these results, under the Gilbert-Elliott channel model, we are able to give a lower bound on the maximum achievable throughput (maximum achievable rate), which is a function of $K$, $n$ and state transition probabilities $p_{01}$ and $p_{10}$. In contrast to the state-of-the-art, we analytically show that our bound is asymptotically tight when $K/\log n$ is fixed as $n$ approaches infinity. Further, through numerical evaluations, we show that our bound is significantly better than existing results.}

%----------------------END-OF MAINBODY------------------------------%
%\end{document}  % This is where a 'short' article might terminate
\section{Acknowledgments}
{\color{black}
The authors would like to thank Dr.~Yin Sun for the valuable discussion that inspired the proof of Theorem~\ref{MonotoneTh} and Swapna B. T. for her helpful comment on the definition of throughput. 

This work was supported in part by NSF grants CNS-0905408, CNS-1012700, from the Army Research Office MURI grant W911NF-08-1-0238, and an HP IRP award.
}
% The following two commands are all you need in the
% initial runs of your .tex file to
% produce the bibliography for the citations in your paper.
\bibliography{IEEEabrv,yangbib} 

% Generated by IEEEtran.bst, version: 1.12 (2007/01/11)
\begin{thebibliography}{10}
\providecommand{\url}[1]{#1}
\csname url@samestyle\endcsname
\providecommand{\newblock}{\relax}
\providecommand{\bibinfo}[2]{#2}
\providecommand{\BIBentrySTDinterwordspacing}{\spaceskip=0pt\relax}
\providecommand{\BIBentryALTinterwordstretchfactor}{4}
\providecommand{\BIBentryALTinterwordspacing}{\spaceskip=\fontdimen2\font plus
\BIBentryALTinterwordstretchfactor\fontdimen3\font minus
  \fontdimen4\font\relax}
\providecommand{\BIBforeignlanguage}[2]{{%
\expandafter\ifx\csname l@#1\endcsname\relax
\typeout{** WARNING: IEEEtran.bst: No hyphenation pattern has been}%
\typeout{** loaded for the language `#1'. Using the pattern for}%
\typeout{** the default language instead.}%
\else
\language=\csname l@#1\endcsname
\fi
#2}}
\providecommand{\BIBdecl}{\relax}
\BIBdecl

\bibitem{YangShroffMobihoc2012}
Y.~Yang and N.~B. Shroff, ``Throughput of rateless codes over broadcast erasure
  channels,'' \emph{MobiHoc 2012}, 2012.

\bibitem{TON}
A.~Shokrollahi, ``Raptor codes,'' in \emph{IEEE Transactions on Information
  Theory}, 2006, pp. 2551--2567.

\bibitem{LT-code}
M.~G. Luby, ``Lt codes,'' 2002.

\bibitem{Ho}
T.~Ho, ``Networking from a network coding perspective,'' \emph{Ph.D.
  dissertation, MIT}, 2004.

\bibitem{Cogill}
R.~Cogill, B.~Shrader, and A.~Ephremides, ``Stable throughput for multicast
  with inter-session network coding,'' \emph{Military Communications
  Conference, 2008}, pp. 1 -- 7, 2008.

\bibitem{CogillBrooke}
R.~Cogill and B.~Shrader, ``Stable throughput for multicast with random linear
  coding,'' \emph{IEEE Transactions on Information Theory}, vol.~57, no.~1,
  2011.

\bibitem{Swapna}
B.~T. Swapna, A.~Eryilmaz, and N.~B. Shroff, ``Throughput-delay analysis of
  random linear network coding for wireless broadcasting,'' \emph{CoRR}, vol.
  abs/1101.3724, 2011.

\bibitem{Online}
P.~Sadeghi, D.~Traskov, and R.~Koetter, ``Adaptive network coding for broadcast
  channels,'' \emph{2009 Workshop on Network Coding, Theory and Applications},
  2009.

\bibitem{MarkovChain}
J.~R. Norris, \emph{Markov Chain}.\hskip 1em plus 0.5em minus 0.4em\relax
  Cambridge, UK: Cambridge University Press, 1998.

\bibitem{RenewalTheory}
R.~G. Gallager, \emph{Discrete Stochastic Processes}.\hskip 1em plus 0.5em
  minus 0.4em\relax Kluwer Academic Publisher, 1996.

\bibitem{Dembozeitouni}
A.~Dembo and O.~Zeitouni, \emph{Large Deviations Techniques and Applications},
  2nd~ed.\hskip 1em plus 0.5em minus 0.4em\relax New York: Springer-Verlag,
  1998.

\bibitem{Lyapunov}
L.~Georgiadis, M.~Neely, and L.~Tassiulas., ``Resource allocation and
  cross-layer control in wireless networks,'' \emph{Foundations and Trends in
  Networking}, vol.~1, no.~1, pp. 1--144, 2006.

\end{thebibliography}
%\appendix

%\balancecolumns
% That's all folks!
\end{document}